\newtheorem{theorem}{Theorem}[section]
\newtheorem{proposition}[theorem]{Proposition}
\newtheorem{conjecture}[theorem]{Conjecture}
\newtheorem{question}[theorem]{Question}
\newtheorem{procedure}[theorem]{Procedure}
\theoremstyle{definition}
\newtheorem{definition}[theorem]{Definition}
\newtheorem{example}[theorem]{Example}
\theoremstyle{remark}
\newtheorem{remark}[theorem]{Remark}
\newcommand{\St}{{S}}
\newcommand{\invtPoly}{\mathcal{P}}
\DeclareMathOperator{\im}{im}
\newcommand{\R}{\mathbb{R}}
\newcommand{\Rnn}{\mathbb{R}_{\geq 0}}
\def\rla{\rightleftarrows}
\def\been{\begin{enumerate}}
\def\enen{\end{enumerate}}
\newcommand{\Ktot}{K_{\mbox{tot}}}
\newcommand{\Ptot}{P_{\mbox{tot}}}
\newcommand{\Stot}{S_{\mbox{tot}}}
\definecolor{dgreen}{rgb}{.2,.6,.2}
\colorlet{darkgreen}{black!30!dgreen}
\definecolor{dblue}{rgb}{0.0,0.0,0.68}
\definecolor{myblue}{RGB}{120,10,10}
\begin{document}

%*******************************************************************
%TITLE OF THE ARTICLE
%*******************************************************************

\title{Emergence of oscillations in a \\ mixed-mechanism 
phosphorylation system}
\author{Carsten Conradi\footnote{HTW Berlin}, Maya Mincheva\footnote{Northern Illinois University}, and Anne Shiu\footnote{Texas A\&M University}}
\date{January 28, 2019}

\maketitle

\begin{abstract}
This work investigates the emergence of oscillations in one of the simplest cellular signaling networks exhibiting oscillations, 
namely, the dual-site phosphorylation and dephosphorylation network (futile cycle), in which 
the mechanism for phosphorylation
 is processive while the one for dephosphorylation is distributive (or vice-versa).
 The fact that this network yields oscillations was shown recently by Suwanmajo and Krishnan. 
Our results, which significantly extend their analyses, are as follows. 
First, in the three-dimensional space of total amounts, the border between systems 
with a stable versus unstable steady state is a surface defined by the vanishing of a single Hurwitz determinant.  Second, this surface consists generically of simple Hopf bifurcations. 
  Next, simulations suggest that 
when the steady state is unstable,   
  oscillations are the norm.
  Finally, the 
emergence of 
oscillations via a Hopf bifurcation is enabled by the catalytic and association constants
of the distributive part of the mechanism:
if these rate constants satisfy two
  inequalities, then the system generically admits a Hopf bifurcation.  
  Our proofs are enabled by the Routh-Hurwitz criterion, a
  Hopf-bifurcation criterion due to Yang, and a monomial
  parametrization of steady states. 
\vskip 0.1cm
%: Keywords
\noindent \textbf{Keywords:} 
multisite phosphorylation, monomial parametrization, oscillation, Hopf bifurcation, Routh-Hurwitz criterion
\end{abstract}

\section{Introduction}
Oscillations have been observed experimentally in signaling networks formed by phosphorylation
and dephosphorylation~\cite{yeast-mapk-oscillations,oscillations-mapk-cancer}, 
which suggests that these networks are involved in timekeeping and synchronization.  
Indeed, multisite phosphorylation is the main mechanism for establishing the 24-hour period in eukaryotic circadian clocks~\cite{ode,beat}.  
Our motivating question, therefore, is, How do oscillations arise in phosphorylation networks?  

We tackle this question for 
 the network that,
according to 
Suwanmajo and Krishnan,
 ``could be the simplest enzymatic modification scheme that can intrinsically exhibit oscillation''~\cite[\S 3.1]{SK}.
This network, in~\eqref{eq:mixed-network},
is the mixed-mechanism (partially processive, partially distributive) dual-site phosphorylation network (or 
 {\bf mixed-mechanism network} for short).  
Examples of networks that include both processive and distributive elements 
 include the ``processive model'' of 
  Aoki {\em et al.}~\cite[Table S2]{Aoki}
and a model of ERK regulation via enzymes MEK and MKP3~\cite[Fig.\ 2]{long-term}.

In the mixed-mechanism network, $S_i$ denotes a substrate with $i$ phosphate groups attached, and $K$ and $P$ are, respectively, a {\em kinase} and a {\em phosphatase} enzyme:
\begin{align}
  \label{eq:mixed-network}
  \begin{split}
S_{0}+K &  \underset{k_2}{\overset{k_1}\rla} S_{0}K \overset{k_3}\longrightarrow  S_1K \overset{k_4}\longrightarrow S_2+K  \\
S_{2}+P & \underset{k_6}{\overset{k_5}\rla} S_{2}P \overset{k_7}\longrightarrow  S_{1}+P \underset{k_9}{\overset{k_8}\rla} S_1P \overset{k_{10}}\longrightarrow S_0+P ~.
%S_{2}+K & \lra S_{0}K \lra S_1K \to S_2+K~ \quad   
\end{split}
\end{align}
When the kinase {\em phosphorylates} -- that is, adds phosphate groups to -- a substrate 
in the mixed-mechanism network 
(via the reactions labeled by $k_1$ to $k_4$),
the kinase and substrate do {\em not} dissociate before both phosphate groups are added.  
Accordingly, the mechanism for phosphorylation is {\em processive}.  
In contrast, when the phosphatase {\em dephosphorylates} -- i.e., removes phosphate groups from -- a substrate (via reactions $k_5$ to $k_{10}$), this mechanism is {\em distributive}: the phosphatase and substrate dissociate each time a phosphate group is removed.  Accordingly, network~\eqref{eq:mixed-network} is said to have a mixed mechanism\footnote{Network~\eqref{eq:mixed-network} is symmetric to the mixed-mechanism network in which phosphorylation is distributive (instead of processive) and dephosphorylation is processive (instead of distributive), so our results apply equally well to that network (cf.\ \cite[networks 21--22]{SK}).}.

The dynamical systems arising from the mixed-mechanism network live in a 9-dimensional space, but, due to three conservation laws, are essentially 6-dimensional.  Specifically, the total amounts of kinase, phosphatase, and substrate -- denoted by $\Ktot$, $\Ptot$, and $\Stot$, respectively -- are conserved.  For each choice of three such total amounts and each choice of positive rate constants $k_i$, there is a unique positive steady state~\cite{SK}.
  One focus of our work is 
  determining when such a steady state undergoes %is 
  a Hopf bifurcation leading to oscillations (with any of
  the $k_i$'s or total amounts as bifurcation parameter).
 
\subsection{Summary of main results}

How do oscillations of the mixed-mechanism network emerge, and how robust are they?  
These questions are the motivation for our work.  Let us describe Suwanmajo and Krishnan's progress in this direction.  
They first found rate constants $k_i$ and total amounts, 
displayed in Table~\ref{tab:rates},
that yield oscillations~\cite[Supplementary Information]{SK}. 
\begin{table}[ht]
  \centering
  \begin{tabular}{|c|c|c|c|c|c|c|c|c|c|} \hline
    $k_{1}$ &  $k_{2}$ &  $k_{3}$ & $k_{4}$ &  $k_{5}$ &  $k_{6}$ & $k_{7}$ & $k_{8}$ &  $k_{9}$ &  $k_{10}$
     \\ \hline 
    1 & 1 & 1 & 1 & 100 & 1 & $0.9$ & 3 & 1 & 100 
    \\ \hline
  \end{tabular}
\quad
\begin{tabular}{|c|c|c|}
\hline
    $\Ktot$ 
	& $\Ptot$
	& $\Stot$\\
\hline
	 17.5 & 5 & 40
    \\ \hline
\end{tabular}
  \caption{\label{tab:rates}
    Rate constants (left) and total amounts (right), from~\cite[Supplementary Information]{SK}, which lead to oscillations 
   in the mixed-mechanism network~\eqref{eq:mixed-network}. }
\end{table}

Next, they examined whether oscillations persist as $\Ktot$ varies. 
What they found, summarized in Figure~\ref{fig:hopf}, is that oscillations persist when $\Ktot$ is in the (approximate) interval 
$(13.03, 29.23)$, %13.0296, 29.2251
 and oscillations arise as the unique steady state undergoes a Hopf bifurcation. 
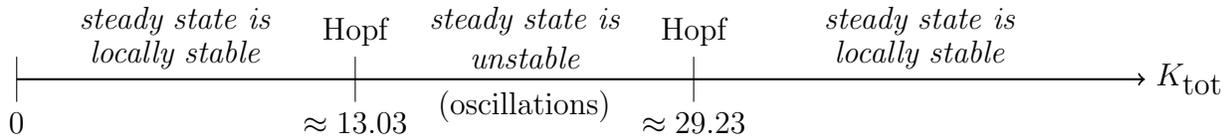
\begin{figure}[ht]
\begin{center}
	\begin{tikzpicture}[scale=3.0]
% The arrow
 \draw[thick,->] (0,0) -- (5,0);
% Tick marks
\draw (0, -.1) -- (0,.1);
\draw (1.5, -.1) -- (1.5,.1);
\draw (3, -.1) -- (3,.1);
% LABEL: numbers
    	\node[below] at (0, -.1) {0};
    	\node[below] at (1.5,-.1) {$\approx 13.03$};
    	\node[below] at (3, -.1) {$\approx 29.23$};
	\node[right] at (5, 0) {$\Ktot$};
% LABEL: stability
    	\node[above] at (0.7, 0.15) {{\em steady state is} }; 
   	\node[above] at (0.7, 0.0) {{\em locally stable}};
    	\node[above] at (2.25, 0.15) {{\em steady state is}}; 
    	\node[above] at (2.25, 0.0) {{\em unstable}};
    	\node[below] at (2.25, 0.0) {(oscillations)};
    	\node[above] at (4, 0.15) {{\em steady state is}}; 
    	\node[above] at (4, 0.0) {{\em locally stable}};
% LABEL: Hopf
    	\node[above] at (1.5, 0.1) { Hopf};
    	\node[above] at (3, 0.1) { Hopf};
 	\end{tikzpicture}
\end{center}
\caption{Stability of the unique steady state of the mixed-mechanism network~\eqref{eq:mixed-network} as a function of $\Ktot$, as analyzed by Suwanmajo and Krishnan~\cite[Fig.\ 4]{SK}.  (The other total amounts, $\Ptot$ and $\Stot$, and the rate constants $k_i$ are those in~Table~\ref{tab:rates}.) 
Oscillations were found when $\Ktot$ is in the ``unstable'' interval~\cite{SK}.}
\label{fig:hopf}
\end{figure}

Subsequently, Conradi and Shiu~\cite{perspective} found that when $\Ptot$ also is allowed to vary,
oscillations exist for larger values of $\Ktot$ (e.g., $\Ktot=100$).  So, how exactly do oscillations depend on the three total amounts (or, equivalently, the initial conditions)?  
Concretely, our goal is to
expand Figure~\ref{fig:hopf} to encompass all possible perturbations to the initial conditions (i.e., the total amounts):
% QUESTION
\begin{question} \label{q:osc}
Consider the mixed-mechanism network~\eqref{eq:mixed-network}, with 
$k_i$'s from Table~\ref{tab:rates}.
\begin{enumerate}
	\item For which values of $\left( \Ktot,~\Ptot,~\Stot \right) \in \mathbb{R}^3_{>0}$ is the unique steady state unstable?  
	\item Whenever (by perturbing parameters or total amounts) a steady state
          switches from being locally stable to unstable, 
            does this always 
            give rise to a Hopf bifurcation? 
\end{enumerate}
\end{question}

The direct method for solving Question~\ref{q:osc}(1) is to solve the steady-state equations, 
and then apply the 
six-dimensional Routh-Hurwitz stability criterion.  However, this approach
is intractable: the resulting Hurwitz determinants are pages-long. 

Accordingly, we take an algebraic shortcut.  
Namely, we find a parametrization 
of the set of steady states,
and then use this for the input to Routh-Hurwitz.
The result is somewhat surprising: each Hurwitz determinant except the last two (which are positive multiples of each other) is always positive.  This yields our answer to Question~\ref{q:osc}(1):
{\em For every ODE system arising from the mixed-mechanism network~\eqref{eq:mixed-network}, 
a (two-dimensional) surface in the three-dimensional space of total amounts defines the border between steady states that are stable and those that are unstable.}  
(Our result even applies to many systems 
for which the $k_i$'s are not those in Table~\ref{tab:rates};
see Proposition~\ref{prop:surface}.) 

We can now translate Question~\ref{q:osc}(2) as follows:
does the surface mentioned above consist 
of Hopf bifurcations?  We prove, using a Hopf-bifurcation criterion stated in terms of Hurwitz determinants, due to 
Yang~\cite{yang-hopf},
that the answer, at least generically, is ``yes'':
{\em 
When the unique steady state of the mixed-mechanism network~\eqref{eq:mixed-network} switches from being stable to unstable, then, generically, it undergoes a Hopf bifurcation.
}

For general one-parameter ODE systems,
there are two types of local bifurcations: 
saddle nodes (which require a zero eigenvalue of the Jacobian matrix) and 
Hopf bifurcations 
(which require a pair of pure imaginary eigenvalues of the Jacobian)~\cite{gh13}. 
We show that a saddle node bifurcation can not occur for any parameter values 
(see the proof of Proposition~\ref{prop:surface}). 
Therefore, only Hopf bifurcations are possible for the mixed-mechanism system. 

A second question we aim to answer is the following:
\begin{question} \label{q:conditions}
  Consider the mixed-mechanism network~\eqref{eq:mixed-network}.
 What conditions on the $k_i$'s guarantee
   a Hopf-bifurcation for some (positive) values of the
  total concentrations? 
\end{question}

As an answer to Question~\ref{q:conditions},
we prove that 
the catalytic constants ($k_7$ and $k_{10}$)
and association constants ($k_5$ and $k_8$)
of the distributive part of the mechanism enable oscillations to emerge
via a Hopf bifurcation.
Specifically, under the simplifying assumption that all dissociation (backward-reaction) constants are
equal ($k_2 = k_6 = k_9$), 
if the rate constants satisfy two
  inequalities -- lower bounds on $k_{10}$ and $k_5/k_8$ -- then the system generically admits a Hopf bifurcation (Proposition~\ref{prop:hopf} and Theorem~\ref{thm:main}).  
  (As a comparison, for the fully distributive dual-site network 
  described in Section~\ref{sec:related-work} below,
  the catalytic constants alone 
  enable bistability~\cite{a6maya}.)  
  Finally, we encode the relevant inequalities in a
  procedure to generate many parameter values for which we expect oscillations (Procedure~\ref{proc:rates}).

\subsection{Connection to related work} \label{sec:related-work}
Our work joins a growing number of works that harness steady-state parametrizations. 
Such results include criteria for when such parametrizations exist~\cite{johnston-param,TG}
and 
methods for using them to determine whether a network is multistationary~\cite{translated,signs,messi,MPM_MAPK}.
Going further, steady-state parametrizations
 can also be used 
to find a witness to multistationarity
or even the precise parameter regions that yield multistationarity \cite{CFMW,a6maya}.  
In this work, we use a steady-state parametrization in a novel way: to study oscillations via Hopf bifurcations.  (Our approach is similar in spirit to using Clarke's convex parameters together with a Hopf-bifurcation criterion~\cite{Domijan2009, Errami, gatermann-hopf,hadac-osc}).

As mentioned earlier, there has been much interest in the dynamics of phosphorylation systems~\cite{perspective}.  The mixed-mechanism network~\eqref{eq:mixed-network} fits into the related literature as follows.
The mixed network is a dual-site network situated
between two extremes: 
 the {\em fully processive} dual-site network -- in which the phosphorylation and dephosphorylation mechanisms are both processive -- and the {\em fully distributive} dual-site network.  One might therefore expect the dynamics of the mixed-mechanism network 
 to straddle those of the two networks.  
 This is indeed the case.  
As summarized in Table~\ref{tab:compare}, and reviewed in~\cite{perspective}, fully processive networks are globally convergent to a unique steady state~\cite{ConradiShiu,EithunShiu,Rao}, while mixed-mechanism networks admit oscillations but not bistability~\cite{SK}, and fully distributive networks admit 
bistability~\cite{bistable} (and the question of oscillations is open~\cite{perspective}).

\begin{table}[hbt]
\centering
\begin{tabular}{l c c c }
\hline
{\bf Dual-site network }                                 & {\bf Oscillations?} & {\bf Bistability?} & {\bf Global convergence?}  \\
\hline
Fully processive                                   & No            & No           & Yes                                          \\
{Mixed-mechanism}                                              & Yes           & No           & No                                           \\
%{\bf Mixed, Michaelis-Menten}  & No            & ?          & ?                                          %\\
Fully distributive                                 & (Open)          & Yes          & No                                           \\
%Fully distributive, Michaelis-Menten  & No            & Yes          & No                                          \\
\hline
\end{tabular}
\caption{Dual-site phosphorylation networks and their properties: whether they admit oscillations or bistability, and whether all trajectories converge to a unique steady state.
\label{tab:compare}}
\end{table}

Finally, we revisit 
Suwanmajo and Krishnan's
claim
mentioned earlier 
that 
the mixed-mechanism network is among the simplest enzymatic mechanisms 
with oscillations.   In support of this claim, Tung proved that the simpler system obtained from the mixed-mechanism network by taking its (two-dimensional) Michaelis-Menten approximation, is  {\em not} oscillatory \cite{ray}.   
Moreover, Rao showed that this approximation is globally convergent to a unique steady state~\cite{Rao-2}.
The validity of the Michaelis-Menten approximation for phosphorylation systems has been called into question~\cite{salazar}, and what we know about the mixed-mechanism system concurs: 
this system is oscillatory, but its Michaelis-Menten approximation is not.

The outline of our work is as follows. 
Section~\ref{sec:background} provides background on multisite phosphorylation, steady states, and Hopf bifurcations.  
Section~\ref{sec:steady-states} gives a monomial parametrization of the steady states of mixed-mechanism network. In 
Section~\ref{sec:results}, we prove our main results (described above).
We use these results in Section~\ref{sec:find-param} 
to give a procedure for generating rate constants admitting Hopf bifurcations.  
In Section~\ref{sec:simulations}, we present simulations that suggest that oscillations are the norm in the unstable-steady-state regime.
Finally, we end with a Discussion in Section~\ref{sec:discussion}.

\section{Background} \label{sec:background}
In this section, we introduce the ODEs arising from the mixed-mechanism network, and recall two criteria: the Routh-Hurwitz criterion for steady-state stability and Yang's criterion for Hopf bifurcations.

\subsection{Differential equations of the mixed-mechanism network} \label{sec:bkrd-phos}

For the mixed-mechanism network~(\ref{eq:mixed-network}), we 
let $x_1,x_2, \ldots, x_9$ denote the species
concentrations in the order given in
Table~\ref{tab:variables}. 
The dynamical system (arising from mass-action kinetics) 
defined by the mixed-mechanism network~\eqref{eq:mixed-network} is
given by the following ODEs: 
\begin{align}  
    \notag
 \dot x_1 &~=~     -k_1 x_1 x_2+k_2 x_3+k_{10} x_9 \\
    \notag
 \dot x_2 &~=~      -k_1 x_1 x_2+k_2 x_3+k_4 x_4 \\
    \notag
 \dot x_3 &~=~      k_1 x_1 x_2-(k_2+k_3) x_3 \\
    \notag
 \dot x_4 &~=~      k_3 x_3- k_4 x_4 \\
    \label{eq:OEs-mixed}
 \dot x_5 &~=~      k_4 x_4-k_5 x_5 x_6+k_6 x_7 \\
    \notag
 \dot x_6 &~=~      -k_5 x_5 x_6-k_8 x_8 x_6+(k_6+k_7) x_7+(k_9+k_{10}) x_9 \\
    \notag
 \dot x_7 &~=~      k_5 x_5 x_6-(k_6+k_7) x_7 \\
    \notag
 \dot x_8 &~=~      k_7 x_7-k_8 x_6 x_8+k_9 x_9 \\
    \notag
 \dot x_9 &~=~      k_8 x_6 x_8-(k_9+k_{10}) x_9 ~.
\end{align}
\begin{table}[hbt]
  \centering
    \begin{tabular}[ht]{|c|c|c|c|c|c|c|c|c|} \hline
      $x_1$ & $x_2$ & $x_3$ & $x_4$ & $x_5$ & $x_6$ & $x_7$ & $x_8$ & $x_9$ \\ \hline
      $S_0$ & $K$ & $S_0K$ & $S_1 K$ & $S_2$ & $P$ & $S_2 P$ & $S_1$ & $S_1 P$ \\ \hline
    \end{tabular}
  \caption{
      Assignment of variables to species for the mixed-mechanism
      network~\eqref{eq:mixed-network}. 
  }
  \label{tab:variables}
\end{table}

The conservation laws arise from the fact that the total amounts of free and bound enzyme or substrate remain constant.  
That is, as the dynamical system~\eqref{eq:OEs-mixed} progresses,
the following three conservation values, denoted by 
$\Ktot, \Ptot, \Stot \in \R_{>0}$, 
remain constant: 
\begin{align}
\label{eqn:conservation}
\notag
\Ktot ~&=~ x_2 + x_3 + x_4~, \\
\Ptot ~&=~ x_6+x_7+x_9  ~, \\
\notag
\Stot ~&=~ x_1+x_3+x_4+x_5+x_7+x_8+x_9  ~.
\end{align}

Also, a trajectory $x(t)$ beginning in 
$\R^9_{\geq 0}$ remains in $\R^9_{\geq 0}$ for all positive time $t$, so it remains in a {\em stoichiometric compatibility class}, which we
denote as follows:
\begin{align}\label{eqn:invtPoly}
\invtPoly~=~\{ x \in  \mathbb{R}^9_{\geq 0} \mid
    \text{ the conservation equations $\eqref{eqn:conservation}$ hold} \}~.
\end{align}

%---------------------------
% SUBSECTION - Stability
%---------------------------
\subsection{Stability of steady states and the Routh-Hurwitz criterion} \label{sec:stable}

The dynamical system~\eqref{eq:OEs-mixed} arising from the mixed-mechanism network is an example of a {\em reaction kinetics system}.  That is, the system of ODEs takes the following form:
\begin{align} \label{eq:ODE}
\frac{dx}{dt} ~ = ~ \Gamma \cdot R(x)~ =:~ g(x)~,
\end{align}
where $\Gamma$ and $R$ are as follows.  Letting $s$ denote the number
of species  and $r$ the number of reactions, $\Gamma$ is an $s \times r$ matrix 
whose $k$-th column is the reaction vector of the $k$-th reaction, i.e., it encodes the net change in each species that results when that reaction takes place.
Also, $R:\Rnn^s \to \Rnn^r$  encodes the reaction rates of the $r$ reactions as functions of the $s$ species concentrations.  

% DEFINITIONS: steady state, non degenerate, etc.
A {\em steady state} (respectively, {\em positive steady state}) of a reaction kinetics system is a nonnegative concentration vector $x^* \in \Rnn^s$ (respectively, $x^* \in \mathbb{R}_{>0}^s$) at which the ODEs~\eqref{eq:ODE} vanish: $g(x^*) = 0$.  Letting $\St := \im(\Gamma)$ denote the {\em stoichiometric subspace}, 
a steady state $x^*$ is {\em nondegenerate} if ${\rm Im}\left( dg (x^*)|_{S} \right) = \St$, where $dg(x^*)$ denotes the Jacobian matrix of $g$ at $x^*$.

A nondegenerate steady state is 
locally asymptotically stable 
if each of the $\sigma:= \dim(\St)$ nonzero eigenvalues of $dg(x^*)$ has negative real part. 
Hence, a steady state is locally stable if and only if the characteristic polynomial of the Jacobian evaluated at the steady state has $\sigma$ roots with negative real part (the remaining roots will be 0).  

To check whether a polynomial has only roots with negative real parts, we appeal to the Routh-Hurwitz criterion below \cite{fg59}.

%---------------------------
% DEFINITION - Hurwitz matrix
%---------------------------
\begin{definition} \label{def:hurwitz}
The {\em $i$-th Hurwitz matrix} of a univariate polynomial 
$p(\lambda)= a_0 \lambda^n + a_{1} \lambda^{n-1} + \cdots + a_n$  
is the following $i \times i$ matrix:
\[
H_i ~=~ 
\begin{pmatrix}
a_1 & a_0 & 0 & 0 & 0 & \cdots & 0 \\
a_3 & a_2 & a_1 & a_0 & 0 & \cdots & 0 \\
\vdots & \vdots & \vdots &\vdots & \vdots &  & \vdots \\
a_{2i-1} & a_{2i-2} & a_{2i-3} & a_{2i-4} &a_{2i-5} &\cdots & a_i
\end{pmatrix}~,
\]
in which the $(k,l)$-th entry is $a_{2k-l}$ 
  as long as $0\leq 2 k - l \leq n$, and
  $0$ otherwise. 
\end{definition}

%---------------------------
% ROUTH-H
%---------------------------
\begin{proposition}[Routh-Hurwitz criterion] \label{prop:routh-hurwitz}  A 
polynomial 
$p(\lambda)= a_0 \lambda^n + a_{1} \lambda^{n-1} + \cdots + a_n$  
with $a_0 > 0$
has all roots with negative real part if and only if all $n$ of its Hurwitz matrices have positive determinant ($\det H_i >0$ for all $i=1,\dots, n$).
\end{proposition}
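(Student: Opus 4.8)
\noindent\emph{Proof proposal.} This is the classical Routh--Hurwitz theorem; my plan is to sketch a self-contained argument rather than merely cite it. Since scaling $p$ by a positive constant changes nothing, assume $a_0>0$. The strategy has three parts: (i) recast ``all roots of $p$ lie in the open left half-plane'' as a winding-number count via the argument principle; (ii) evaluate that count by running the Euclidean algorithm on the even and odd parts of $p$, which is exactly Routh's tabular reduction; and (iii) identify the pivots produced by that reduction with ratios of leading principal minors of the Hurwitz matrix $H_n$.

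For (i), write $p(i\omega)=U(\omega)+iV(\omega)$, where $U$ collects the terms of $p(\lambda)$ of even degree and $V$ those of odd degree, so $U,V$ are real polynomials. If $p$ has no purely imaginary roots, then as $\omega$ runs from $-\infty$ to $+\infty$ the net change of $\arg p(i\omega)$ equals $\pi(n-2m)$, where $m$ is the number of roots of $p$ in the open right half-plane; hence $p$ is Hurwitz exactly when this net change equals $n\pi$. Standard bookkeeping with the leading coefficients and degree parities of $U$ and $V$ shows that the net argument change equals $n\pi$ if and only if the Cauchy index of the rational function $V/U$ over $(-\infty,\infty)$ equals $n$, i.e.\ the real zeros of $U$ and $V$ are all real, simple, and strictly interlace in the correct order (the Hermite--Biehler picture).

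For (ii) and (iii): apply the Euclidean-type division algorithm to the pair $(U,V)$, producing a descending chain of polynomials whose successive leading coefficients form the first column $a_0,a_1,b_1,c_1,\dots$ of the Routh array; by a Sturm-type argument the Cauchy index of $V/U$ equals the signed count of agreements in this first column, which equals $n$ precisely when \emph{every} entry is positive and the array attains its full $n+1$ rows (one checks that a Hurwitz $p$ cannot produce a vanishing pivot or an early termination, since either would force a common factor of $U$ and $V$, hence a purely imaginary root). Finally, with the convention $\det H_0:=1$, one has the determinantal identity: the $k$-th first-column Routh entry below $a_0$ equals $\det H_k/\det H_{k-1}$ for $k=1,\dots,n$. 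This is proved by recognizing one Routh reduction step as one step of Gaussian elimination on $H_n$, so that the elimination pivots---whose partial products are the leading minors $\det H_k$---are exactly the Routh entries, with the Desnanot--Jacobi (Dodgson condensation) identity supplying the clean passage between consecutive minors. Telescoping then yields: all first-column Routh entries are positive $\iff$ $\det H_k>0$ for every $k=1,\dots,n$.

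Assembling the pieces: for $p$ with $a_0>0$ and no imaginary roots, $p$ is Hurwitz $\iff$ the Cauchy index of $V/U$ is $n$ $\iff$ all Routh pivots are positive $\iff$ $\det H_i>0$ for $i=1,\dots,n$; and if $p$ has an imaginary root pair $\pm i\omega_0$ then $\det H_{n-1}=0$ (e.g.\ by Orlando's formula), so the criterion correctly reports failure. I expect the combinatorial core of part (iii)---matching the row indices of the Routh array to the principal minors of $H_n$ and dispatching the degenerate cases (zero pivots, early termination, $a_n=0$)---to be the main obstacle; the rest is routine complex analysis and linear algebra. An alternative route that avoids the argument principle is a direct induction on $\deg p$: show that, for $a_0,a_1>0$, $p$ is Hurwitz if and only if the degree-$(n-1)$ polynomial obtained from a single Routh step is Hurwitz (via a homotopy that moves the roots without crossing the imaginary axis), and then combine this with the same determinantal identity.
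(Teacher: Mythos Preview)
Your sketch is a reasonable outline of one of the standard proofs of the classical Routh--Hurwitz theorem, but note that the paper does not actually prove this proposition at all: it is stated as background with a citation to Gantmacher, and then used as a black box. So there is no ``paper's proof'' to compare against; the authors simply invoke the criterion. Your argument-principle/Cauchy-index/Routh-array route is one of the textbook approaches (essentially the Hermite--Biehler line of reasoning), and the determinantal identification of Routh pivots with ratios $\det H_k/\det H_{k-1}$ via Gaussian elimination on $H_n$ is also standard. A couple of places in your sketch would need tightening if you were to write this up in full: the precise relation between the net argument change and the Cauchy index depends on the parity of $n$ and on which of $U/V$ or $V/U$ you take, so the bookkeeping there deserves more care than ``standard''; and the claim that a vanishing Routh pivot forces a common factor of $U$ and $V$ (hence an imaginary root) is not quite right in general---a zero pivot can occur without $p$ having an imaginary root, and the usual treatments handle this by a separate limiting or $\epsilon$-perturbation argument rather than by contradiction. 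But for the purposes of this paper, none of this matters: the proposition is quoted, not proved.
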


%---------------------------
% SUBSECTION - Hopf
%---------------------------
\subsection{Hopf bifurcations and a criterion due to Yang} \label{sec:hopf}
A {\em simple Hopf bifurcation} is a bifurcation in which a single complex-conjugate pair of eigenvalues of the Jacobian matrix crosses the imaginary axis, while all other eigenvalues remain with negative real parts.  Such a bifurcation, if it is supercritical, generates nearby {\em oscillations} or periodic orbits~\cite{liu}.

To detect simple Hopf bifurcations, we will use a criterion of Yang that characterizes Hopf bifurcations in terms of Hurwitz-matrix determinants (Proposition~\ref{prop:yang}).

\noindent {\bf Setup for Yang's criterion.}
We consider an ODE system parametrized by $\mu \in \mathbb{R}$:
    \begin{align*}
        \dot x ~=~ g_{\mu}(x)~,
    \end{align*}
where $x \in \mathbb{R}^n$, and $g_{\mu}(x)$ varies smoothly in $\mu$ and $x$.  Assume that $x_0 \in \mathbb{R}^n$ is a steady state of the system defined by $\mu_0$, that is, $g_{\mu_0}(x_0)=0$.  Assume, furthermore, that we have a smooth curve of steady states:
    \begin{align}  \label{eq:curve}
    \mu ~\mapsto ~ x(\mu)~
    \end{align}
(that is, $g_{\mu}\left( x(\mu) \right)= 0$ for all $\mu$) and that $x(\mu_0)=x_0$.  
Denote the characteristic polynomial of the Jacobian matrix of $g_{\mu}$, evaluated at $x(\mu)$, as follows:
    \begin{align*}
         p_{\mu}(\lambda) 
         ~:=~ \det \left( \lambda I - {\rm Jac}~ g_{\mu} \right)|_{x = x(\mu)}
         ~=~ \lambda^n + a_{1}(\mu) \lambda^{n-1} + \cdots + a_n(\mu)~, 
    \end{align*}
and, for $i=1,\dots, n$, let $H_i(\mu)$ denote the $i$-th Hurwitz matrix of $p_{\mu}(\lambda)$.

\begin{proposition} [Yang's criterion~\cite{yang-hopf}] \label{prop:yang}
  Assume the above setup.  Then, there is a simple Hopf bifurcation at
  $x_0$ with respect to $\mu$ if and only if the following hold:
    \begin{enumerate}[(i)]
        \item $a_n(\mu_0)>0$,
        \item $\det H_1(\mu_0)>0$, $\det H_2(\mu_0)>0$, \dots, $\det
          H_{n-2}(\mu_0)>0$, and 
        \item $\det H_{n-1}(\mu_0)= 0$ and $\frac{d( \det H_{n-1}(\mu)
            )}{d \mu}|_{\mu = \mu_0} \neq 0$.
    \end{enumerate}
\end{proposition}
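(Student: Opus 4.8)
The plan is to derive Yang's criterion from two classical facts about the Hurwitz matrices of a monic real polynomial $p(\lambda)=\lambda^{n}+a_1\lambda^{n-1}+\cdots+a_n$ with roots $\lambda_1,\dots,\lambda_n$ (listed with multiplicity): \emph{Orlando's formula}
\[
\det H_{n-1} = (-1)^{n(n-1)/2}\prod_{1\le i<j\le n}(\lambda_i+\lambda_j),
\]
and the identity $\det H_{n}=a_n\det H_{n-1}$ (both classical; see~\cite{fg59}). By definition, a simple Hopf bifurcation at $x_0$ with respect to $\mu$ amounts to three conditions on the Jacobian $J:={\rm Jac}\,g_{\mu_0}|_{x_0}$ and its $\mu$-dependence: (a) $J$ has a \emph{simple} pair of nonzero purely imaginary eigenvalues $\pm i\omega_0$, with $\omega_0>0$ say; (b) every other eigenvalue of $J$ has negative real part; and (c) the smooth eigenvalue branch $\lambda(\mu)$ with $\lambda(\mu_0)=i\omega_0$ --- which exists by the implicit function theorem applied to $p_\mu(\lambda)=0$, because $i\omega_0$ is a simple root --- crosses the imaginary axis transversally, $\tfrac{d}{d\mu}\operatorname{Re}\lambda(\mu)\big|_{\mu_0}\neq0$. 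It therefore suffices to prove: first, that (a)--(b) is equivalent to conditions (i), (ii), and $\det H_{n-1}(\mu_0)=0$; and second, that, given (a)--(b), condition (c) is equivalent to $\tfrac{d}{d\mu}\det H_{n-1}(\mu)\big|_{\mu_0}\neq0$.

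The heart of the matter is the first equivalence, a statement purely about the real polynomial $p=p_{\mu_0}$. For ``(a)--(b) $\Rightarrow$ (i), (ii), $\det H_{n-1}=0$'', factor $p=(\lambda^2+\omega_0^2)\,q(\lambda)$ with $q$ monic of degree $n-2$ and Hurwitz stable. Then $a_n=p(0)=\omega_0^2\,q(0)>0$, since the constant term of a monic stable polynomial is positive, which gives (i); Orlando's formula for $p$ has the vanishing factor $(i\omega_0)+(-i\omega_0)$, giving $\det H_{n-1}(\mu_0)=0$; and $\det H_i(p)>0$ for $1\le i\le n-2$ because one checks the identity $\det H_i\big((\lambda^2+\omega_0^2)q\big)=\det H_i(q)$ for such $i$ (adjoining a purely imaginary pair leaves the first $n-2$ Hurwitz determinants unchanged), while $\det H_i(q)>0$ by Proposition~\ref{prop:routh-hurwitz} applied to the stable polynomial $q$. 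For the converse, $\det H_{n-1}(\mu_0)=0$ produces via Orlando a pair of roots with $\lambda_i=-\lambda_j$; condition (i) excludes $\lambda_i=\lambda_j=0$; and feeding condition (ii) into the refinement of the Routh--Hurwitz criterion that counts right-half-plane roots from the signs of the leading Hurwitz determinants (Li\'enard--Chipart; see~\cite{fg59}) forces $p$ to have no roots in the open right half-plane, and --- since $\det H_{n-2}(\mu_0)>0$ rules out the more degenerate strata --- exactly one conjugate pair of purely imaginary roots, which by (i) is $\pm i\omega_0$ with $\omega_0>0$, with the remaining $n-2$ roots in the open left half-plane.

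For the transversality step, assume (a)--(b). As $\pm i\omega_0$ are simple roots of $p_{\mu_0}$, the implicit function theorem yields, for $\mu$ near $\mu_0$, a smooth factorization $p_\mu(\lambda)=\big(\lambda^2+\beta(\mu)\lambda+\gamma(\mu)\big)\,\widetilde q_\mu(\lambda)$ with $\beta(\mu_0)=0$, $\gamma(\mu_0)=\omega_0^2$, and $\widetilde q_{\mu_0}$ Hurwitz stable of degree $n-2$; the two roots of the quadratic factor are the conjugate branch $\lambda(\mu),\overline{\lambda(\mu)}$, so $\beta(\mu)=-2\operatorname{Re}\lambda(\mu)$. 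Applying Orlando's formula to this factorization, every pairwise sum of roots other than the one equal to $-\beta(\mu)$ has negative real part at $\mu_0$ (a root of $\widetilde q_{\mu_0}$ added to another root of $\widetilde q_{\mu_0}$ or to $\pm i\omega_0$, noting $i\omega_0+i\omega_0\neq0$), so $\det H_{n-1}(\mu)=\beta(\mu)\,c(\mu)$ with $c$ smooth near $\mu_0$ and $c(\mu_0)\neq0$. Hence $\tfrac{d}{d\mu}\det H_{n-1}(\mu)\big|_{\mu_0}=c(\mu_0)\,\beta'(\mu_0)=-2\,c(\mu_0)\,\tfrac{d}{d\mu}\operatorname{Re}\lambda(\mu)\big|_{\mu_0}$, which is zero exactly when (c) fails. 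By continuity, (a)--(b) and the strict inequalities in (i), (ii) persist for $\mu$ near $\mu_0$, so the ``all other eigenvalues stay in the open left half-plane'' clause in the definition needs no separate verification.

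The step I expect to be the main obstacle is the first equivalence, and within it the converse direction: extracting from (i), (ii), and $\det H_{n-1}(\mu_0)=0$ the statement that precisely one conjugate pair of eigenvalues lies on the imaginary axis while the remaining $n-2$ are strictly stable. The cleanest route is to invoke the classical Hurwitz/Li\'enard--Chipart root-counting theory rather than to expand the $(n-1)\times(n-1)$ determinant directly; the careful part is tracking the sign sequence of the Hurwitz determinants across the boundary point $\det H_{n-1}=0$ (where also $\det H_n=a_n\det H_{n-1}=0$) and using $\det H_{n-2}(\mu_0)>0$ to exclude the degenerate configurations.
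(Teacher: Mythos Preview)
The paper does not prove this proposition: it is stated as a known result, with a citation to Yang~\cite{yang-hopf}, and no proof is given (a remark afterwards only notes Liu's earlier variant). So there is no ``paper's own proof'' to compare your attempt against.

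As for your sketch on its own merits: the overall architecture is sound---Orlando's formula is indeed the natural tool for tying $\det H_{n-1}=0$ to a pair $\lambda_i+\lambda_j=0$, and the transversality computation via the smooth factorization $p_\mu=(\lambda^2+\beta(\mu)\lambda+\gamma(\mu))\widetilde q_\mu$ is exactly the right idea, yielding $\det H_{n-1}(\mu)=\beta(\mu)c(\mu)$ with $c(\mu_0)\neq 0$. Two places, however, are asserted rather than argued. First, the identity $\det H_i\big((\lambda^2+\omega_0^2)q\big)=\det H_i(q)$ for $1\le i\le n-2$ is stated without justification; it is true (one can see it via column operations on the Hurwitz array after writing $a_k=c_k+\omega_0^2 c_{k-2}$), but it is not obvious and deserves at least a one-line indication of why. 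Second---and you flag this yourself---the converse direction is the real content of Yang's theorem, and your appeal to ``the refinement of the Routh--Hurwitz criterion that counts right-half-plane roots'' together with ``$\det H_{n-2}>0$ rules out the more degenerate strata'' is a placeholder, not an argument. The standard Li\'enard--Chipart statements concern polynomials with all Hurwitz determinants nonzero, and the boundary case $\det H_{n-1}=0$, $\det H_n=0$ (which is exactly where you are) requires a separate analysis; one needs to show that under (i) and (ii) the imaginary-axis roots form exactly one simple conjugate pair, not two pairs or a repeated pair. This is where Yang's paper does the work, and your proposal would need to either reproduce that analysis or cite it explicitly.
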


\begin{remark} \label{rmk:yang-vs-liu}
  Liu~\cite{liu} gave an earlier version of Yang's Hopf-bifurcation
  criterion (Proposition~\ref{prop:yang}), using a variant of the
  Hurwitz matrices that differs from ours.
\end{remark}

\section{Steady states of the mixed-mechanism network} \label{sec:steady-states}
In this section, we recall that the mixed-mechanism network admits a unique steady state in each compatibility class (Proposition~\ref{prop:unique-steady-state}), and prove that the set of steady states admits a monomial parametrization (Theorem~\ref{thm:mon}).
We then use this parametrization to analyze the space of 
compatibility classes (Proposition~\ref{prop:param-comp-classes}).

\subsection{Uniqueness of steady states}

Suwanmajo and Krishnan proved that, for {\em every} choice of 
positive rate constants %$k_i$ 
and positive total amounts, 
the mixed-mechanism network 
does {\em not} admit multiple positive steady states~\cite[\S A.2]{SK}.  Additionally, there are no boundary steady states in any compatibility class $\invtPoly$, as in~\eqref{eqn:invtPoly}, and $\invtPoly$ is compact.  
Hence,  via a standard application of the Brouwer fixed-point theorem~(e.g., \cite[Remark 3.9]{TSS}), 
 there is always a unique steady state:
  
\begin{proposition}[Uniqueness of steady states] \label{prop:unique-steady-state}
For any choice of 
positive rate constants $k_i$ and positive total amounts $\Ktot,$ $\Ptot,$ and $\Stot$, the dynamical system~\eqref{eq:OEs-mixed} arising from the mixed-mechanism network 
has a unique steady state in $\invtPoly$, and it is a positive steady state.
\end{proposition}  
Proposition~\ref{prop:unique-steady-state} precludes the existence of multiple positive steady states, and hence the existence of a saddle node bifurcation. Thus, a Hopf bifurcation is the only other one-parameter bifurcation which may occur. Indeed, we will show that a Hopf bifurcation exists for some parameter values in Section~\ref{sec:results}. 

Also, Proposition~\ref{prop:unique-steady-state} proves part of a conjecture that we posed~\cite{ConradiShiu}.  
The other half of the conjecture, however, posited that mixed-mechanism systems, 
like fully processive systems~\cite{ConradiShiu,EithunShiu}, 
are globally convergent to the unique steady state.  Suwanmajo and Krishnan demonstrated that this is false: the system can exhibit oscillatory behavior~\cite{SK}!  

This capacity for oscillations is the focus of this work, and our analysis will harness a monomial parametrization of the steady states.  We turn to this topic now.

\subsection{A monomial parametrization of the steady states}

The steady states of the mixed-mechanism network can be parametrized by monomials (and thus is said to have ``toric steady states''~\cite{TSS}):
%----------------
% Proposition: 
% monomial parametrization
%----------------
\begin{proposition}[Parametrization of the steady states] \label{thm:mon}
For every choice of rate constants $k_i >0$,
the set of positive steady states of the mixed-mechanism system~\eqref{eq:OEs-mixed} is three-dimensional and is
  the image of the following map
  $\chi = \chi_{k_1, \dots, k_{10} }$: 
  \begin{align} \label{eq:param}
    \chi : \mathbb{R}^3_{+} ~& \to~ \mathbb{R}^{9}_{+} \\
    (x_1, x_2, x_6) ~&\mapsto ~(x_1, x_2, \dots, x_{9})~, \notag
  \end{align}  
  given by
    \begin{align*}
    x_3  ~&:=~  
    \frac{{k_1}  }{{k_2} +{k_3} } {x_1} {x_2}, 
    \quad  \quad 
    x_4 ~:=~ 
    \frac{{k_1} {k_3} }{ ({k_2}+k_3) {k_4} } {x_1} {x_2},    
    \quad \quad
    x_5  ~:=~
    \frac{{k_1} {k_3}  ({k_6}+{k_7}) }{  ({k_2} +{k_3} ) {k_5} {k_7} } \frac{{x_1} {x_2} }{x_6},  \\
%    \quad 
    x_7  ~&:=~ 
    \frac{{k_1} {k_3}  }{ ({k_2}+{k_3}) k_7 } {x_1} {x_2},    
    \quad 
    x_8  ~:=~ 
    \frac{{k_1} {k_3}  ({k_9}+{k_{10}}) }{ ({k_2} +{k_3} ) {k_8} {k_{10}} } \frac{ {x_1} {x_2}}{x_6},     
    \quad 
    x_9  ~:=~ 
   \frac{{k_1} {k_3} }{ ( {k_2} +{k_3})  {k_{10}} }{x_1} {x_2}~.
    \end{align*}
\end{proposition}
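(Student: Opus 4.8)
The plan is to solve the steady-state system $g(x)=0$, where $g$ is the right-hand side of~\eqref{eq:OEs-mixed}, directly, by eliminating variables in a carefully chosen order. The key observation is that the equations coming from the ``intermediate'' species ($\dot x_3,\dot x_4,\dot x_5,\dot x_7,\dot x_8,\dot x_9$) are each linear in a single as-yet-undetermined concentration once the others are known, so that they can be cascaded; and after combining two suitable pairs of them the cascade closes, leaving $x_3,x_4,x_5,x_7,x_8,x_9$ expressed as the monomials in $(x_1,x_2,x_6)$ appearing in~\eqref{eq:param}. Since the nine steady-state equations satisfy exactly three linear dependencies (one for each conservation law in~\eqref{eqn:conservation}, e.g.\ $\dot x_2+\dot x_3+\dot x_4\equiv 0$), solving six independent ones --- which is what this cascade does --- suffices, and the remaining equations $\dot x_1=\dot x_2=\dot x_6=0$ must then hold; I would nonetheless verify this directly as a check.

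Concretely, I would proceed as follows. From $\dot x_3=0$ and $\dot x_4=0$ one reads off $x_3=\frac{k_1}{k_2+k_3}x_1x_2$ and $x_4=\frac{k_3}{k_4}x_3$. Adding the equations $\dot x_5=0$ and $\dot x_7=0$ makes the term $k_5x_5x_6$ cancel and gives $k_7x_7=k_4x_4$, hence $x_7=\frac{k_4}{k_7}x_4$; then $\dot x_7=0$ gives $x_5=\frac{k_6+k_7}{k_5k_7}\,\frac{k_4x_4}{x_6}$. Symmetrically, adding $\dot x_8=0$ and $\dot x_9=0$ makes the term $k_8x_6x_8$ cancel and gives $k_{10}x_9=k_7x_7$, hence $x_9=\frac{k_7}{k_{10}}x_7$; then $\dot x_9=0$ gives $x_8=\frac{k_9+k_{10}}{k_8k_{10}}\,\frac{k_7x_7}{x_6}$. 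Back-substituting along the chain $x_3\to x_4\to x_7\to x_9$ together with $x_4\to x_5$ and $x_7\to x_8$ reproduces exactly the formulas defining $\chi$ in~\eqref{eq:param}, and all these expressions are positive because the $k_i$ and $x_1,x_2,x_6$ are positive.

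It then remains to (i) confirm that $\chi(x_1,x_2,x_6)$ really is a steady state for all $(x_1,x_2,x_6)\in\R^3_{+}$, and (ii) conclude that every positive steady state is of this form. For (i) I would substitute the formulas into $\dot x_1,\dot x_2,\dot x_6$: using $k_1x_1x_2=(k_2+k_3)x_3$, the chain of equalities $k_4x_4=k_3x_3=k_7x_7=k_{10}x_9$, and $k_5x_5x_6=(k_6+k_7)x_7$, $k_8x_6x_8=(k_9+k_{10})x_9$, each right-hand side telescopes to $0$ (for instance $\dot x_1=-k_1x_1x_2+k_2x_3+k_{10}x_9=-k_3x_3+k_{10}x_9=0$). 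For (ii), the derivation above shows the six relations are forced by the steady-state equations, so any positive steady state $x^{*}$ equals $\chi(x_1^{*},x_2^{*},x_6^{*})$. Finally, $\chi$ is injective since its first, second, and sixth output coordinates recover $(x_1,x_2,x_6)$; hence the set of positive steady states, being the image of the open set $\R^3_{+}$ under a smooth injective map whose Jacobian has constant rank $3$, is three-dimensional (as expected, since $9$ minus the dimension of the stoichiometric subspace is $3$).

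The computation presents no genuine obstacle; the only step requiring care is choosing the elimination order, and in particular spotting the two cancellations --- pairing $\dot x_5=0$ with $\dot x_7=0$, and $\dot x_8=0$ with $\dot x_9=0$ --- that close the chain while leaving the ``coupling'' variable $x_6$ surviving only in the denominators of $x_5$ and $x_8$. Everything else is bookkeeping with the rate constants.
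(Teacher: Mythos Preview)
Your proof is correct and follows essentially the same approach as the paper: both solve the steady-state equations by successive elimination to express $x_3,\dots,x_9$ in terms of $(x_1,x_2,x_6)$, then verify by substitution. The only cosmetic difference is the elimination order---the paper uses $\dot x_1=0$ to obtain $x_9$ (and then $\dot x_9,\dot x_8,\dot x_7$ for $x_8,x_7,x_5$), whereas you reach the same relations via the sums $\dot x_5+\dot x_7$ and $\dot x_8+\dot x_9$; your treatment of injectivity and the three-dimensionality is a bit more explicit than the paper's.
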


\begin{proof}
It is straightforward to check that the image of 
$\chi$ is contained in the set of steady states: 
after substituting $\chi(x_1,x_2,x_3)$, 
the right-hand side of the mixed-mechanism network ODEs~\eqref{eq:OEs-mixed} vanishes.  Conversely, let $x^*=(x_1,x_2,\dots, x_9)$ be a positive steady state.
The right-hand side of the ODEs~\eqref{eq:OEs-mixed} vanish at $x^*$, so, in the following order, we 
use $\dot x_3=0$ to solve for $x_3$ in terms of $x_1$ and $x_2$, 
use $\dot x_4=0$ to solve for $x_4$ via $x_3$ which was already obtained, 
use $\dot x_1=0$ to obtain $x_9$,
use $\dot x_9=0$ to obtain $x_8$,
use $\dot x_8=0$ to obtain $x_7$, and finally 
use $\dot x_7=0$ to obtain $x_5$.
This yields precisely the parametrization~\eqref{eq:param}, so $x^*$ is in the image of $\chi$.
\end{proof}

%----------------------
% REMARK - param. was in Perspective
%----------------------
\begin{remark}
The parametrization~\eqref{eq:param} appeared earlier in~\cite{perspective}.
\end{remark}

%----------------------
% REMARK - rational param. thm.
%----------------------
\begin{remark}
That we could achieve a steady-state parametrization was expected, due to Thomson and Gunawardena's rational parametrization theorem for multisite systems ~\cite{TG}.  
\end{remark}

%----------------------
% REMARK - toric
%----------------------
\begin{remark}
In the parametrization $\chi$ in Theorem~\ref{thm:mon}, we divide by $x_6$, so $\chi$ is technically not a monomial map.  However, $\chi$ can be made monomial: we introduce $y:=\frac{x_1}{x_6}$, so that the parametrization accepts as input $(y,x_2,x_6)$, and then 
$x_1$ is replaced by $y x_6$.
\end{remark}

\subsection{A parametrization of the compatibility classes}
Every compatibility class 
$\invtPoly$
of the mixed-mechanism network, by definition~\eqref{eqn:invtPoly}, is uniquely determined by a choice of total amounts
$(\Ktot,~\Ptot,~\Stot) \in \mathbb{R}^3_{>0}$. Thus, we identify the set of compatibility classes with $\{(\Ktot,~\Ptot,~\Stot)\}= \mathbb{R}^3_{>0}$.  
We parametrize this set below (Proposition~\ref{prop:param-comp-classes}).

Let $\phi: \mathbb{R}_{> 0}^9 \to \mathbb{R}^3_{>0}$ denote the map sending a vector of concentrations to the corresponding total amounts $( \Ktot,~\Ptot,~\Stot)$, as in~\eqref{eqn:conservation}:
 \begin{align} \label{eq:phi}
 	\phi(x)~:=~ ( x_2 + x_3 + x_4~, ~
	 x_6+x_7+x_9  ~,~ x_1+x_3+x_4+x_5+x_7+x_8+x_9 ) ~.
 \end{align}
Each compatibility class $\invtPoly$ contains a unique positive steady state (Proposition~\ref{prop:unique-steady-state}), and the positive steady states are  parametrized by  $\chi$ 
from Theorem~\ref{thm:mon}, so 
the space of compatibility classes is parametrized as follows:

\begin{proposition}[Parametrization of the compatibility classes] 
\label{prop:param-comp-classes}
Identify every compatibility class $\invtPoly$ of the mixed-mechanism network~\eqref{eq:mixed-network}, with the corresponding total amounts 
$(\Ktot,~\Ptot,~\Stot) \in \mathbb{R}^3_{>0}$. 
Then, for every choice of positive rate constants $k_i$,
the following is a bijection that sends a vector $ (x_1,x_2,x_6) \in \mathbb{R}^3_{>0} $
to the compatibility class in which the unique steady state is $\chi(x_1,x_2,x_6) $:
\begin{align*}
\phi \circ \chi: ~
    \mathbb{R}^3_{>0} \to \mathbb{R}^3_{>0} = \{(\Ktot,~\Ptot,~\Stot)\}~,
\end{align*}
where 
	$\phi$ is as in~\eqref{eq:phi} and 
	$\chi$ is the steady-state parametrization~\eqref{eq:param}.
The map $\phi \circ \chi$ 
  is given by
  \begin{displaymath}
    \begin{split}
      (x_1,x_2,x_6)  &\mapsto 
      \Biggl(
      x_{2}+ \frac{k_{1}}{k_{2}+k_{3}}  \left(1 + \frac{k_{3}}{k_{4}}\right) x_{1} x_{2}
      , \quad  %\\
      x_6 + \frac{k_{1} k_3}{k_{2}+k_{3}} 
      \left(\frac{1}{k_{7}} + \frac{1}{k_{10}}\right) x_{1} x_{2}, \\  %\\
      &\qquad 
      x_{1} + \frac{k_{1} k_3}{k_{2} + k_{3}}  \left[
        \left( \frac{1}{k_3} + \frac{1}{k_{4}} + \frac{1}{k_{7}} + \frac{1}{k_{10}} \right) 
        + \frac{1}{x_{6}}
        \left(\frac{k_{6}+k_{7}}{ k_{5} k_{7}} +
          \frac{k_{10}+k_{9}}{k_{10} k_{8}} \right) \right] x_{1} x_{2}  \Biggr)~,
    \end{split}
  \end{displaymath}
  which becomes, when the rate constants are those in Table~\ref{tab:rates}, the following:
\begin{align} \label{eq:param-c-class}
	 (x_1,~x_2,~x_6) ~ & \mapsto ~ \left(x_1 x_2 + x_2,~
	 x_6+\frac{1009}{1800}x_1 x_2, ~ 
	 x_1+\frac{2809}{1800}x_1 x_2 + \frac{161}{900} \frac{x_1x_2}{x_6} \right)~. 
	\end{align}
\end{proposition}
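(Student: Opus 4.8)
The plan is to prove the two parts of the statement in turn: first the closed-form expression for $\phi \circ \chi$, which is a direct substitution, and then the bijectivity, which follows almost immediately by combining the uniqueness result (Proposition~\ref{prop:unique-steady-state}) with the steady-state parametrization (Proposition~\ref{thm:mon}).

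For the formula, I would substitute the coordinates of $\chi(x_1,x_2,x_6)$ from Proposition~\ref{thm:mon} into the definition~\eqref{eq:phi} of $\phi$ and collect terms. For the $\Ktot$-coordinate, $x_2 + x_3 + x_4 = x_2 + \tfrac{k_1}{k_2+k_3} x_1 x_2 + \tfrac{k_1 k_3}{(k_2+k_3)k_4} x_1 x_2$, and factoring out $\tfrac{k_1}{k_2+k_3} x_1 x_2$ gives the displayed expression; the $\Ptot$- and $\Stot$-coordinates are treated the same way, separating the two terms carrying a $1/x_6$ factor (those coming from $x_5$ and $x_8$) from the rest. Specializing the rate constants to those of Table~\ref{tab:rates} is then pure arithmetic (for instance $\tfrac12\!\left(\tfrac{1}{0.9}+\tfrac{1}{100}\right) = \tfrac{1009}{1800}$ and $\tfrac12\!\left(\tfrac{1.9}{90}+\tfrac{101}{300}\right) = \tfrac{161}{900}$), producing~\eqref{eq:param-c-class}.

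For bijectivity, the key preliminary observation is that $\chi$ is itself injective, since the first, second, and sixth coordinates of $\chi(x_1,x_2,x_6)$ are $x_1$, $x_2$, $x_6$, so the input is read off from the output; also $\phi$ sends positive vectors to positive total amounts, being a sum of positive entries, so $\phi\circ\chi$ is a well-defined map $\mathbb{R}^3_{>0}\to\mathbb{R}^3_{>0}$. Surjectivity: given $(\Ktot,\Ptot,\Stot)\in\mathbb{R}^3_{>0}$, Proposition~\ref{prop:unique-steady-state} supplies a positive steady state $x^*$ in the corresponding compatibility class $\invtPoly$; by Proposition~\ref{thm:mon}, $x^* = \chi(a)$ for $a=(x_1^*,x_2^*,x_6^*)\in\mathbb{R}^3_{>0}$, whence $(\phi\circ\chi)(a) = \phi(x^*) = (\Ktot,\Ptot,\Stot)$. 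Injectivity: if $(\phi\circ\chi)(a) = (\phi\circ\chi)(b)$, then $\chi(a)$ and $\chi(b)$ are positive steady states lying in the same compatibility class (the one with those total amounts), so by the uniqueness part of Proposition~\ref{prop:unique-steady-state} they are equal, and injectivity of $\chi$ yields $a=b$. Finally, the statement that $\phi\circ\chi$ sends $(x_1,x_2,x_6)$ to the class whose unique steady state is $\chi(x_1,x_2,x_6)$ is built into this argument, since $\chi(x_1,x_2,x_6)$ is a positive steady state with total amounts $\phi(\chi(x_1,x_2,x_6))$.

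I do not expect a real obstacle: the proposition is essentially a repackaging of Propositions~\ref{prop:unique-steady-state} and~\ref{thm:mon}, and the only points requiring care are recording that $\chi$ is injective and that the composite lands in the open positive orthant. The most laborious step is the routine verification of the explicit formula and its numerical specialization, and the only place one can slip is in the arithmetic with the Table~\ref{tab:rates} values.
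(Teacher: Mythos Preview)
Your proposal is correct and follows the same approach as the paper: the paper justifies the proposition in a single sentence preceding its statement, noting that each compatibility class has a unique positive steady state (Proposition~\ref{prop:unique-steady-state}) and that the positive steady states are parametrized by $\chi$ (Proposition~\ref{thm:mon}), so $\phi\circ\chi$ parametrizes the compatibility classes. Your write-up simply fills in the details the paper leaves implicit --- the injectivity of $\chi$, the well-definedness of the codomain, and the arithmetic for the specialization --- and these are all handled correctly.
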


\begin{example} \label{ex:param}
Consider the mixed-mechanism system with 
rate constants from Table~\ref{tab:rates}.
To compute the unique steady state $x^*$ in the
 compatibility class given by
   $ (\Ktot,~ 
	\Ptot,~
	\Stot) =(17.5,~5,~40)$,
we use Proposition~\ref{prop:param-comp-classes}.
Namely, we know that $\phi \circ \chi (x_1^*,x_2^*,x_6^*) = (17.5,5,40)$,
so we solve 
(using, e.g., {\tt Mathematica}~\cite{mathematica}) for the unique positive solution:
\[ 
(x_1^*,~x_2^*,~x_6^*) ~ \approx ~ (1.0134, ~8.6916, ~0.0624)~.
\]  
We obtain the remaining coordinates of $x^*$ using the parametrization
$\chi$ in~\eqref{eq:param}:
\begin{align} \label{eq:steady-ste-for-17-5-40}
	x^* ~
	&= ~
	\chi (x_1^*,~x_2^*,~x_6^*) \\ %(1.0134, 8.6916, 0.0624)
	~&\approx ~
	(1.0134,~ 8.6916,~ 4.4041,~ 4.4041,~ 1.4893,~ 0.0624,~4.8935,~23.7512,~ 0.0440 )~. \notag
\end{align}
\end{example}

%-----------------
% Subsection: hopf (figure)
%-----------------
\subsection{Steady states and Hopf bifurcations} \label{sec:hopf-figure}
Our analysis of oscillations in the mixed-mechanism system is based on
Hopf bifurcations.  
Hopf-bifurcation diagrams are displayed in 
Figure~\ref{fig:1-para-conti}, 
where the total amounts are the bifurcation parameters
(c.f.\ Figure~\ref{fig:hopf} which is with respect to $\Ktot$).
Figure~\ref{fig:1-para-conti}
suggests that, in the 3-dimensional space of 
total amounts, there is a surface of Hopf bifurcations.  Indeed, 
we will see in the next section that this is the case
(see Theorem~\ref{thm:main} and Figure~\ref{fig:2-para-conti}).

% FIGURE
\begin{figure}[h]

  \begin{subfigure}{0.3\textwidth}
    \includegraphics[width=\textwidth]{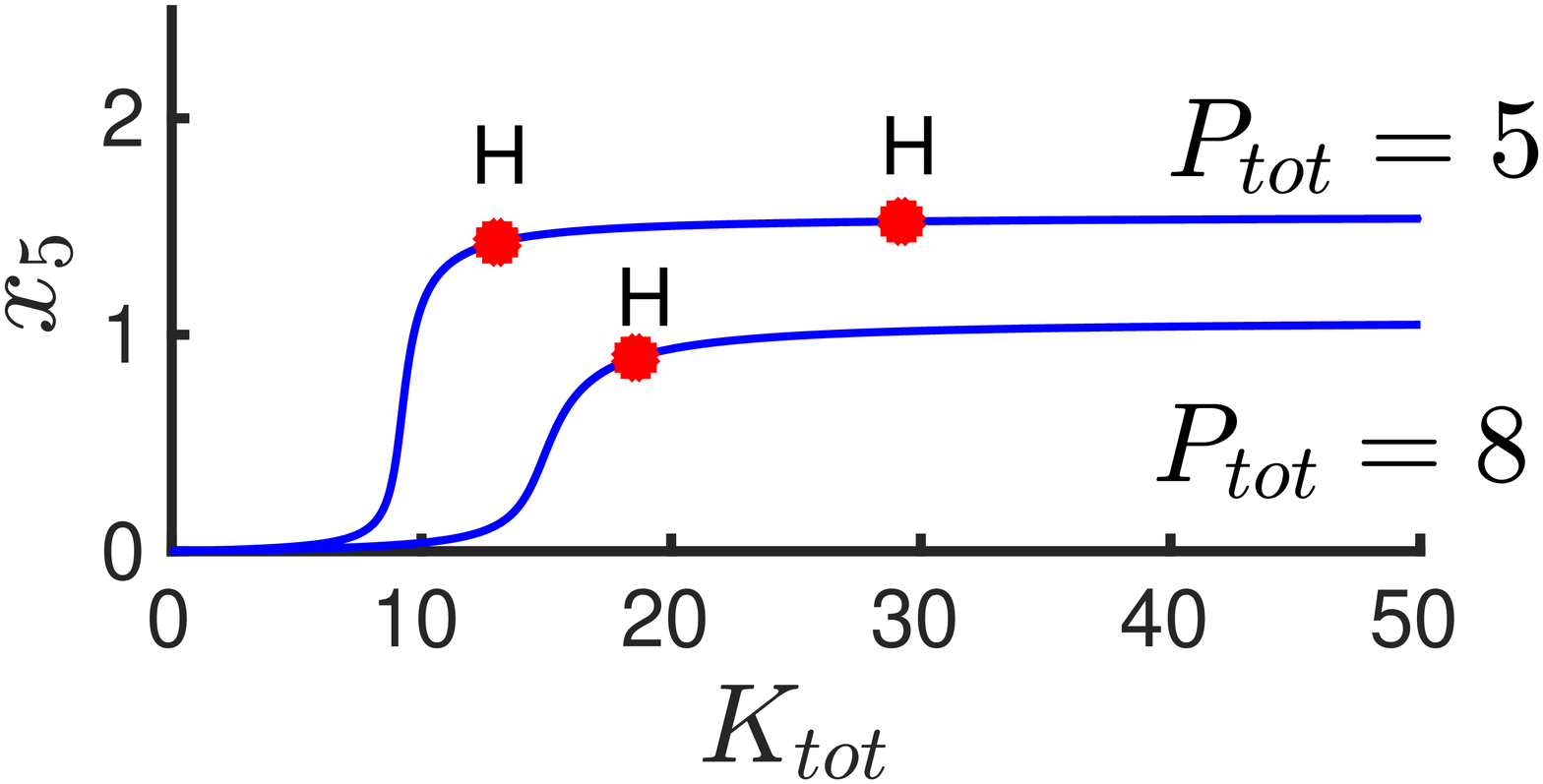}
    \subcaption{
      Bif.~parameter $\Ktot$.
    }
  \end{subfigure}
  \hfill
  \begin{subfigure}{0.3\textwidth}
    \includegraphics[width=\textwidth]{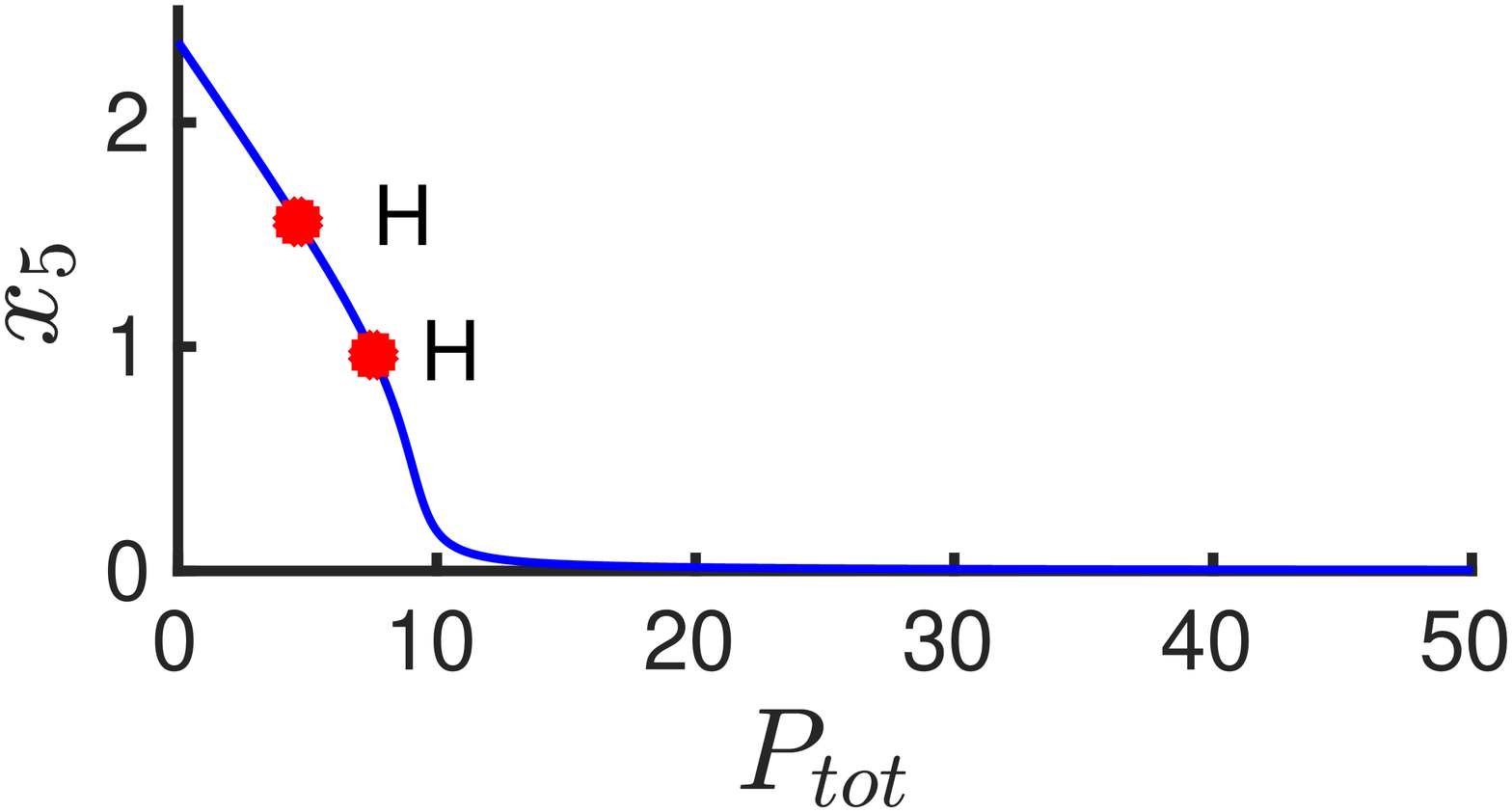}
    \subcaption{
      Bif.~parameter $\Ptot$.
    }    
  \end{subfigure}
  \hfill
  \begin{subfigure}{0.3\textwidth}
    \includegraphics[width=0.9\textwidth]{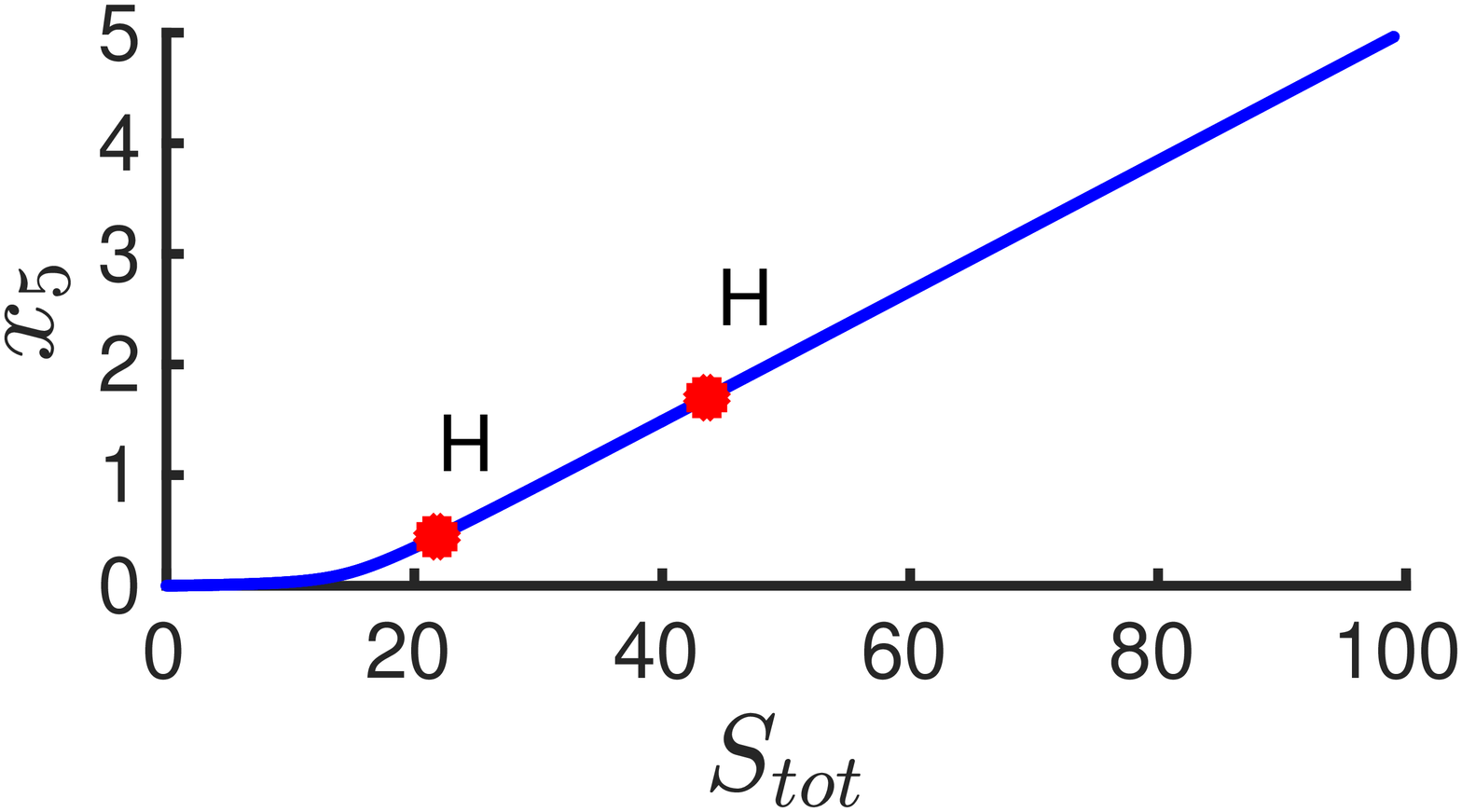}
    \subcaption{
      Bif.~parameter $\Stot$. 
    }    
  \end{subfigure}
  \caption{
    \label{fig:1-para-conti}
    Numerical continuation of the unique positive steady state,
     in~\eqref{eq:steady-ste-for-17-5-40}, when
     $ (\Ktot,~ \Ptot,~	\Stot) =(17.5,~5,~40)$:
    (a) For $\Ptot =  5,8$ and $\Stot=40$, we observe
    (supercritical) Hopf
    bifurcations at $\Ktot \approx 13.0296$, $29.2251$ ($\Ptot=5$) and
    $\Ktot \approx 18.5758$ ($\Ptot =8$).
    (b) For $\Ktot = 5$ and $\Stot=40$, we observe  (supercritical)
    Hopf
    bifurcations at $\Ptot \approx 4.6310$ and $\Ptot \approx 7.5479$.
    (c) For $\Ktot = 17.5$ and $\Ptot=5$, we observe
    (supercritical) Hopf bifurcations at $\Stot \approx 21.8213$ and
    $\Stot \approx 43.5944$.  All figures in this work were made 
    using {\tt Matcont}~\cite{matcont}.
  }
\end{figure}

%---------------------------
% SECTION - Results
%---------------------------
\section{Hopf bifurcations in the mixed-mechanism system} \label{sec:results}
We saw in the previous section that the mixed-mechanism network yields a unique positive steady state in each compatibility class. 
Now we show that the compatibility classes with 
a  {\em stable} steady state are separated from those with an {\em unstable} steady state 
by a single surface~$\mathcal H$ (Proposition~\ref{prop:surface} and Theorem~\ref{thm:hopf}), 
and, under stronger hypotheses, crossing the surface $\mathcal H$ generically corresponds to undergoing a Hopf bifurcation (Theorem~\ref{thm:main}).
(Recall that {\em generically} means that the exceptional set has zero measure.  So, we will show that the subset of the surface corresponding to non-Hopf points has dimension at most 1.)

To simplify computations, we assume that dissociation (backward-reaction) constants are
equal: $k_2 = k_6 = k_9$.    In chemistry, the forward reaction is usually more thermodynamically favorable than the backward  reaction. Therefore, the rate constant of a forward reaction is much larger than the rate constant of the backward reaction \cite{adp18}.  We choose small values for  the dissociation rate constants in Section~5, similar to what was done in \cite{fha14}.

\begin{proposition} \label{prop:surface}
  Consider the dynamical system~\eqref{eq:OEs-mixed} arising from the mixed-mechanism network and any positive rate constants for which $k_2 = k_6 = k_9$.
  Then:
\begin{enumerate}
    \item Every compatibility class $\invtPoly$ contains a unique (positive) steady state $x^*$. 
    \item Exactly one of the following holds:
   \begin{enumerate}[(a)]
	\item The unique steady state $x^*$ in each compatibility class $\invtPoly$ is locally asymptotically stable.	
	\item  In the space of total amounts $\{(\Ktot,\Ptot,\Stot)\} = \mathbb{R}^3_{>0}$, which we identify with the space of compatibility classes $\invtPoly$, 
    a surface $\mathcal{H}$ defines the border between those $\invtPoly$ whose unique steady state $x^*$ is locally asymptotically stable and those $\invtPoly$ for which $x^*$ is unstable.
	\end{enumerate}
\end{enumerate}
\end{proposition}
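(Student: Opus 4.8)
The plan is to use the monomial parametrization $\chi$ of Theorem~\ref{thm:mon} and the Routh–Hurwitz criterion (Proposition~\ref{prop:routh-hurwitz}). Part (1) is immediate from Proposition~\ref{prop:unique-steady-state}. For part (2), first I would write the characteristic polynomial of the Jacobian of $g$ restricted to the stoichiometric subspace $\St$ (which is $\sigma$-dimensional with $\sigma = 6$, since there are three conservation laws). Concretely, I would work with $\det(\lambda I - \mathrm{Jac}\, g)$, which is $\lambda^3$ times a degree-$6$ polynomial $p(\lambda) = \lambda^6 + a_1 \lambda^5 + \cdots + a_6$, and evaluate its coefficients at the steady state $\chi(x_1, x_2, x_6)$. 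The key move is that after this substitution, each $a_i$ becomes a rational function (in fact a polynomial after clearing a common positive denominator) in the three free parameters $x_1, x_2, x_6$ and the rate constants $k_i$, rather than a pages-long expression in the original nine concentrations. Setting $k_2 = k_6 = k_9$ keeps these expressions manageable.

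Next I would compute the six Hurwitz determinants $\det H_1, \dots, \det H_6$ of $p(\lambda)$ as functions of $(x_1, x_2, x_6)$ and the $k_i$. The claim to establish — by direct (computer-assisted) inspection of the coefficient signs — is that $\det H_1, \det H_2, \det H_3, \det H_4$ are positive for all positive parameter values, that $\det H_6 = a_6 \cdot \det H_5$ with $a_6 > 0$ always (so $\det H_6$ and $\det H_5$ have the same sign and vanish together), and that $a_6 > 0$ always (this last fact rules out a zero eigenvalue, hence no saddle-node bifurcation, as already announced in the text). Granting this, the Routh–Hurwitz criterion says that the steady state $x^*$ is locally asymptotically stable precisely when $\det H_5 > 0$, and unstable when $\det H_5 < 0$. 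So the sign of the single function $\det H_5 \colon \mathbb{R}^3_{>0} \to \mathbb{R}$, pulled back through the bijection $\phi \circ \chi$ of Proposition~\ref{prop:param-comp-classes} to the space of total amounts, governs everything.

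Finally, I would translate this into the dichotomy (a)/(b). Consider the function $h := \det H_5 \circ (\phi \circ \chi)^{-1}$ on $\mathbb{R}^3_{>0} = \{(\Ktot, \Ptot, \Stot)\}$; it is real-analytic because $\phi \circ \chi$ is a real-analytic bijection with real-analytic inverse (its Jacobian is nonvanishing, or one inverts the explicit triangular-in-structure formulas in Proposition~\ref{prop:param-comp-classes}). If $h > 0$ everywhere, we are in case (a). Otherwise, $h$ takes a nonpositive value somewhere; combined with the fact that $h$ is also positive somewhere (e.g. near the boundary where the network is essentially a small perturbation of a globally stable regime — or simply because, if $h \le 0$ everywhere, a separate short argument or a single explicit stable sample point contradicts it), the connectedness of $\mathbb{R}^3_{>0}$ forces the zero set $\mathcal{H} := \{h = 0\}$ to be nonempty and to separate the open sets $\{h > 0\}$ and $\{h < 0\}$; since $h$ is real-analytic and not identically zero, $\mathcal{H}$ is a real-analytic hypersurface, i.e.\ a surface in the three-dimensional parameter space, which is exactly case (b).

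The main obstacle I expect is the sign analysis of $\det H_1, \dots, \det H_4$ and the identity $\det H_6 = a_6 \det H_5$: these are large symbolic computations, and showing positivity of $\det H_3, \det H_4$ for \emph{all} positive $(x_1, x_2, x_6, k_i)$ with $k_2 = k_6 = k_9$ may require exhibiting each as a sum of monomials with positive coefficients (after clearing denominators), which is not guaranteed a priori and is the crux of why the parametrization and the simplifying assumption $k_2 = k_6 = k_9$ are needed. A secondary (but routine) point is verifying that $\phi \circ \chi$ is a real-analytic diffeomorphism onto $\mathbb{R}^3_{>0}$, so that pulling $\det H_5$ back preserves the "zero set is a surface" conclusion.
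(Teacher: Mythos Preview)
Your proposal is correct and follows essentially the same approach as the paper: parametrize steady states via $\chi$, compute the characteristic polynomial $\lambda^3 p(\lambda)$ of the Jacobian at $\chi(x_1,x_2,x_6)$, verify (computer-assisted, via positive-monomial expansions) that $b_1,\dots,b_6>0$ and $\det H_1,\dots,\det H_4>0$, reduce stability to the sign of $\det H_5$, and then pull back through $\phi\circ\chi$. The one place the paper is more concrete than your sketch is the step ``$h$ is positive somewhere'': rather than appealing to a boundary/perturbation argument, the paper shows directly that the leading coefficient of ${\rm numerator}(\det H_5)$ as a polynomial in $x_1$ is a sum of positive monomials, so $\det H_5>0$ for $x_1$ sufficiently large.
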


\begin{proof}
Item 1 follows from Proposition~\ref{prop:unique-steady-state}.

For item 2, 
let $J$ denote the Jacobian matrix of the mixed-mechanism system~\eqref{eq:OEs-mixed},
with equal 
 dissociation constants: $k_2 = k_6 = k_9=:k_b$, 
evaluated at the parametrized steady state $\chi(x_1,x_2,x_6)$, from~\eqref{eq:param}.  The characteristic polynomial of $J$ is:
\begin{align*}
    p(\lambda) ~:=~  \det (\lambda I - J) 
    ~=~ \lambda^3 (\lambda^6+ b_1 \lambda^5 + b_2 \lambda^4 + \cdots + b_6 )~,
\end{align*}
where the coefficients $b_i$ (displayed below) are rational functions in 
$x_1,x_2,x_6$ and the $k_i$'s.
  To streamline reading we only give the complete numerator
  of $b_6$ and $b_1$. 
  The full coefficients can be found in the {\tt Mathematica} file
  \texttt{mixed\_coeffs\_charpoly\_kb.nb}\footnote{This file and others mentioned below are in the Supporting Information; see Appendix~\ref{sec:A}.}. 

{\footnotesize
\begin{align} \label{eq:b6}
 \text{numerator} (b_6) ~&=~ k_{1}^2 k_{3}^2 k_{4} (k_{10}+k_{7})
                           (k_{10} k_{5} k_{7}+k_{5} k_{7} k_b+k_{10}
                           k_{8} (k_{7}+k_b)) x_{1} x_{2}^2 \\
                         &\qquad 
                           \notag
                           +k_{1} k_{10} k_{3} k_{4} k_{7} (k_{3}+k_b) (k_{10} k_{5} k_{7}+k_{5} k_{7}
                           k_b+k_{10} k_{8} (k_{7}+k_b)) x_{2} x_{6} \\
                         &\qquad 
                           \notag
                           +k_{10}^2 k_{4} k_{5} k_{7}^2 k_{8} (k_{3}+k_b)^2
                           x_{6}^2+k_{1} k_{10}^2 (k_{3}+k_{4}) k_{5}
                           k_{7}^2 k_{8} (k_{3}+k_b) x_{1} x_{6}^2 \\
                         &\qquad 
                           \notag
                           +k_{1} k_{10}
                           k_{5} k_{7} (k_{10} k_{4} k_{7}+k_{3} k_{4}
                           k_{7}+k_{10} k_{3} (k_{4}+k_{7})) k_{8}
                           (k_{3}+k_b) x_{2} x_{6}^2 %\\
 \end{align}
\begin{align}
  \notag
  \text{numerator}(b_5) ~&=~ k_{1}^2 k_{3}^2 k_{4} (k_{10}+k_{7})
                           (k_{10}+k_b) (k_{7}+k_b)  x_{1} x_{2}^2 \\
                           &\qquad 
                             \notag
                             +k_{1} k_{10}
                             k_{3} k_{4} k_{7} (k_{10}+k_b) (k_{3}+k_b)
                             (k_{7}+k_b) x_{2} x_{6} + \ldots
  \\
  \notag
  \text{numerator}(b_4) ~&=~ 
                           k_{1} k_{3} k_{4} (k_{10}+k_{7})
                           (k_{10}+k_b) (k_{3}+k_b) (k_{7}+k_b) x_{1}
                           x_{2} + \ldots % k_{1}^2
  \\
  \notag
  \text{numerator}(b_3) ~&=~ 
                           \ldots 
                           + k_{1}^2 k_{3} \Bigl(k_{10}^2 (k_{7}+k_b)+k_{7} k_b
                           (k_{3}+k_{4}+k_{7}+k_b) \\ \notag
                         &\qquad \qquad 
                           +k_{10}
                           \left((k_{7}+k_b)^2+k_{3} (2 k_{7}+k_b)+k_{4} (2
                           k_{7}+k_b)\right)\Bigr) x_{1}^2 x_{2} +
                           \ldots 
  \\
  \notag
  \text{numerator}(b_2) ~&=~ 
                           \ldots +k_{1}^2 k_{3} (k_{7} k_b+k_{10} (2
                           k_{7}+k_b)) x_{1}^2 x_{2} + \ldots 
  \\
  \notag
  \text{numerator}(b_1) ~&=~ 
                           k_{1} k_{3} (k_{7} k_b+k_{10} (2 k_{7}+k_b)) x_{1}
                           x_{2} 
                           +k_{10} k_{7} (k_{3}+k_b)
                           (k_{10}+k_{3}+k_{4}+k_{7}+3 k_b) x_{6} \\ \notag
                         &\qquad 
                           +k_{1} k_{10} k_{7} (k_{3}+k_b) x_{1} x_{6}
                           +k_{1} k_{10} k_{7} (k_{3}+k_b) x_{2} x_{6} 
                           +k_{10} k_{7} (k_{5}+k_{8}) (k_{3}+k_b) x_{6}^2 
  \\
  \intertext{\normalsize And for the denominators:}
  \notag
  \text{denominator}(b_6) ~&=~ k_{10} (k_{b}+k_{3}) k_{7}
  \\
  \notag
  \text{denominator}(b_i) ~&=~ k_{10} (k_{b}+k_{3}) k_{7} x_{6}~, \quad \text{for }
                             i=2,3,4, 5~.
\end{align}
} 

As $x_1$, $x_2$, $x_6$ and the $k_i$ are positive,
thus $b_1, b_2, \dots, b_6>0$ 
(in the aforementioned {\tt Mathematica} file, we checked 
the above numerators are sums of only positive monomials).

Recall that, due to the 3 conservation laws~\eqref{eqn:conservation}, the Jacobian matrix has rank 6, not 9.
Accordingly, the relevant Hurwitz matrix, 
namely, for $p(\lambda)/\lambda^3$,
is as follows:
\begin{align*}
\begin{pmatrix}
b_1&  1&  0&  0&  0&  0\\
b_3&  b_2&  b_1&  1&  0&  0 \\
b_5&  b_4&  b_3&  b_2&  b_1&  1 \\
0&  b_6&  b_5&  b_4&  b_3&  b_2 \\
0&  0&  0&  b_6&  b_5&  b_4 \\
0&  0&  0&  0&  0&  b_6 
\end{pmatrix}    
\end{align*}

Consider the Hurwitz determinants.  First $\det H_1 = b_1 >0$. 
The next 3 Hurwitz determinants 
are also positive:
{
  \footnotesize
  \begin{align*}
    \text{numerator} (    \det H_2) ~&=~ 
                                       k_{1}^3 k_{3}^2 (k_{7}
                                       k_b+k_{10} (2 k_{7}+k_b))^2
                                       x_{1}^3 x_{2}^2 \\
                                     &\qquad 
                                       +k_{1}^3 k_{10}
                                       k_{3} k_{7} (k_{3}+k_b) (k_{7}
                                       k_b+k_{10} (2 k_{7}+k_b)) x_{1}^3
                                       x_{2} x_{6} + \ldots \\
    \text{numerator} (    \det H_3) ~&=~ 
                                       k_{1}^5 k_{3}^3 (k_{10} k_{5}
                                       k_{7}+k_{5} k_{7} k_b+k_{10}
                                       k_{8} (k_{7}+k_b)) (k_{7}
                                       k_b+k_{10} (2 k_{7}+k_b))^2
                                       x_{1}^5 x_{2}^3 x_{6}  + \ldots\\
    \text{numerator} (    \det H_4) ~&=~  
                                       k_{1}^7 k_{3}^4 (k_{10} k_{5}
                                       k_{7}+k_{5} k_{7} k_b+k_{10}
                                       k_{8} (k_{7}+k_b)) (k_{7}
                                       k_b+k_{10} 
                                       (2 k_{7}+k_b))^2 \\
                                     &\qquad 
                                       \Bigl(k_{5}
                                       k_{7} (k_{3}+k_{4}+k_{7})
                                       k_b+k_{10}^2 k_{8}
                                       (k_{7}+k_b) + \\
                                     &\qquad \qquad 
                                       k_{10} 
                                       (k_{3}+k_{4}+k_{7}) (k_{5}
                                       k_{7}+k_{8} (k_{7}+k_b))\Bigr)
                                       x_{1}^7 x_{2}^4 x_{6}^2 +
                                       \ldots \\
    \intertext{\normalsize where the denominators, which are positive, are, respectively:}
    \text{denominator} (\det H_2) ~&=~ k_{10}^2 k_{7}^2 (k_{b}+k_{3})^2 x_{6}^2 \\
    \text{denominator} (\det H_3) ~&=~ k_{10}^3 k_{7}^3 (k_{b}+k_{3})^3 x_{6}^3\\
    \text{denominator} (\det H_4) ~&=~ k_{10}^4 k_{7}^4 (k_{b}+k_{3})^4 x_{6}^4 
  \end{align*}
}
(We display only the leading terms of the
  polynomials; the complete polynomials together with an algorithmic
  verification of positivity are in
  \texttt{mixed\_Hi.nb}.)
The final Hurwitz determinant is $\det H_6 = (b_6) (\det H_5)$, and we saw that $b_6 >0$.  
So, by the Routh-Hurwitz criterion (Proposition~\ref{prop:routh-hurwitz}), the steady state $\chi(x_1,x_2,x_6)$ is locally stable if and only if  $\det H_5 >0$.  

Hence, the surface $\mathcal{H}$ that delineates the boundary between compatibility classes with stable steady states vs. those with unstable steady states is defined by
 $\det H_5 \circ  (\phi\circ \chi)^{-1}=0$, where $\phi\circ \chi$ is the parametrization of compatibility classes from Proposition~\ref{prop:param-comp-classes}.
If $\mathcal H$ intersects the positive orthant $\mathbb{R}^3_{>0}$, then case (b) of the proposition holds.  
Otherwise, if $\mathcal H \cap \mathbb{R}^3_{>0} = \emptyset$, then
we claim that we are in case (a).  To show this, we need to verify that $\det H_5(x_1,x_2,x_6)>0$ for some $(x_1,x_2,x_6) \in \mathbb{R}^3_{>0}$.  
The denominator of 
$\det H_5(x_1,x_2,x_6)$ is strictly positive:
\begin{displaymath}
  \text{denominator} (\det H_5) =  k_{10}^5 k_{7}^5 (k_{3}+k_b)^5 x_{6}^5.
\end{displaymath}
So we need only show that the numerator of 
$\det H_5(x_1,x_2,x_6)$ is strictly positive for some $(x_1,x_2,x_6) \in \mathbb{R}^3_{>0}$.

To this end, we view this numerator as a polynomial in $x_1$ (so the
coefficients are rational functions of $x_2$, $x_6$, and the
$k_i$'s):
{
  \footnotesize
  \begin{align}
    \notag
      \text{numerator} (\det H_5) ~&=~ x_{1}^9 x_{2}^4
      \Biggl(
      \frac{k_{10} k_{7} x_{6} (k_{3}+kb)}{k_{3} (k_{10} (2
        k_{7}+kb)+k_{7} kb)}+x_{2}
      \Biggr) \\
      \label{eq:numerator-x1}
      &\qquad
      \Biggl[
      k_{8} x_{6}
      \left(\alpha_{01}+\alpha_{10}\frac{k_{5}}{k_{8}}\right)
      +k_{8}^2 x_{6}^2
      \left(
        \alpha_{02}+\alpha_{11} \frac{k_{5}}{k_{8}}+\alpha_{20}
        \left(\frac{k_{5}}{k_{8}}\right)^2
      \right) +
      \\ \notag
      &\qquad \qquad
      k_{8}^3 x_{6}^3 
      \left(
        \alpha_{03}+\alpha_{12}\frac{k_{5}}{k_{8}}
        +\alpha_{21}
        \left(\frac{k_{5}}{k_{8}}\right)^2
        +\alpha_{30}
        \left(\frac{k_{5}}{k_{8}}\right)^3
      \right)
      \Biggr]
      + \text{lower degree terms in} \; x_1~,
  \end{align}
}
where the coefficients $\alpha_{ij}$ are sums of (many) positive
  monomials and are given in the file \texttt{mixed\_analyis\_H5N\_x1\_LT.nb}.
Therefore (for fixed $x_2$ and $x_6$) when $x_1$ is sufficiently large, the expression~\eqref{eq:numerator-x1} is positive, as desired.
\end{proof}

The proof of Proposition~\ref{prop:surface} 
focused on the surface $\mathcal H$ defined by the equation  $\det H_5 \circ  (\phi\circ \chi)^{-1}=0$.  This surface sometimes meets the positive orthant $\mathbb{R}^3_{>0}$, 
and indeed we show that this is the case when certain relationships hold among the rate constants.

\begin{theorem} \label{thm:hopf}
Consider the dynamical system~\eqref{eq:OEs-mixed} arising from the mixed-mechanism network.
Assume the positive rate constants satisfy $k_2 = k_6 = k_9$ and the following inequality:
\begin{equation}
  \label{eq:kc_condi}
  k_{10} k_{3} k_{4}-(k_{3}+k_{4}) (k_{3}+k_{7}) (k_{4}+k_{7}) ~>~ 0~.
\end{equation}
%$k_{10} k_{3} k_{4}-(k_{3}+k_{4}) (k_{3}+k_{7}) (k_{4}+k_{7}) > 0$.  
If $k_5/k_8$ is sufficiently large, then there is 
a compatibility class $\invtPoly$ 
whose unique steady state $x^*$ is unstable.
\end{theorem}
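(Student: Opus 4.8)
\emph{Proof plan.}
The first step is to recast the statement in terms of the single Hurwitz determinant $\det H_5$ isolated in the proof of Proposition~\ref{prop:surface}. Under the standing assumption $k_2=k_6=k_9$, that proof shows that the unique steady state $\chi(x_1,x_2,x_6)$ is locally asymptotically stable if and only if $\det H_5(x_1,x_2,x_6)>0$, and that the denominator of $\det H_5$ is strictly positive on $\mathbb{R}^3_{>0}$. Since, by Proposition~\ref{prop:param-comp-classes}, the map $\phi\circ\chi$ identifies $\mathbb{R}^3_{>0}=\{(x_1,x_2,x_6)\}$ with the space of compatibility classes, the theorem follows once we exhibit a single triple $(x_1,x_2,x_6)\in\mathbb{R}^3_{>0}$, for $k_5/k_8$ large, at which the numerator of $\det H_5$ — call it $N$ — is strictly negative: the unique steady state of the corresponding compatibility class is then not locally asymptotically stable, i.e.\ unstable. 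So the whole argument reduces to a sign question about the polynomial $N$.

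The plan for that sign question is a dominant-balance argument in which $\rho:=k_5/k_8$ is the large parameter. By~\eqref{eq:numerator-x1}, the top-degree-in-$x_1$ block of $N$ is a sum of positive monomials, so $N$ can be negative only where $x_1$ is bounded and the lower-degree-in-$x_1$ terms of $N$ — which are exactly the terms carrying the highest powers of $\rho$ — take over. I would therefore pick a one-parameter family $(x_1,x_2,x_6)\in\mathbb{R}^3_{>0}$ designed to isolate these terms (for instance fixing $x_1,x_2$ and letting $x_6$ tend to a limit, or moving along a suitable ray), organize $N$ along it by powers of $\rho$, and read off the $\rho$-leading coefficient. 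The expectation — and the reason hypothesis~\eqref{eq:kc_condi} has exactly the form it does — is that this coefficient equals $-c\,\bigl(k_{10}k_3k_4-(k_3+k_4)(k_3+k_7)(k_4+k_7)\bigr)$ times a manifestly positive expression in the remaining rate constants (and the chosen family), for some constant $c>0$. When~\eqref{eq:kc_condi} holds, this expression is negative, so for $\rho=k_5/k_8$ sufficiently large the $\rho$-leading contribution dominates the finitely many remaining terms of $N$ along the family, giving $N<0$ at a point of $\mathbb{R}^3_{>0}$, as required. (The same analysis, together with the fact from the proof of Proposition~\ref{prop:surface} that $N>0$ for $x_1$ large, shows en route that the surface $\mathcal H$ actually meets $\mathbb{R}^3_{>0}$, i.e.\ that case (b) of Proposition~\ref{prop:surface} occurs here.)

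The main obstacle is the symbolic bookkeeping with the explicit — and lengthy — polynomial $N$ (recorded in the file \texttt{mixed\_analyis\_H5N\_x1\_LT.nb}). Two things must be verified. First, that along the chosen family the $\rho$-leading coefficient of $N$ really does factor through $k_{10}k_3k_4-(k_3+k_4)(k_3+k_7)(k_4+k_7)$ with the stated negative sign; this is the conceptual heart of the theorem and explains why the catalytic constant $k_{10}$ and the ratio $k_5/k_8$ are the quantities that must be large. Second, that every other monomial of $N$ is strictly subdominant in $\rho$ (and in the family parameter) along that family, so that the positive top-$x_1$ block of~\eqref{eq:numerator-x1}, and all other positive contributions, cannot cancel the dominant negative term. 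Both amount to comparisons of degrees / Newton-polytope vertices in $\rho$ and the family parameter and can be checked symbolically; once the dominant term is pinned down, passing to the limit and invoking~\eqref{eq:kc_condi} is routine.
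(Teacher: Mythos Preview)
Your plan is essentially the paper's proof: reduce to the sign of $N:=\text{numerator}(\det H_5)$ via Proposition~\ref{prop:surface}, then run a dominant-balance argument that surfaces the factor $k_{10}k_3k_4-(k_3+k_4)(k_3+k_7)(k_4+k_7)$ in the leading coefficient with respect to $\rho=k_5/k_8$.

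One point to correct, since it affects the choice of family: it is \emph{not} true that ``the lower-degree-in-$x_1$ terms of $N$ are exactly the terms carrying the highest powers of $\rho$''. Equation~\eqref{eq:numerator-x1} already shows that the $x_1^9$-block contains the positive contribution $\alpha_{30}\rho^3$, so the $\rho^3$-coefficient of $N$ is a polynomial in $(x_1,x_2,x_6)$ of mixed sign, and your suggested family ``fix $x_1,x_2$ and let $x_6$ tend to a limit'' is not obviously enough to make it negative. The paper does not organize $N$ by $x_1$ at this stage at all; instead it expands $N$ as a degree-$9$ polynomial in $x_2$ (equation~\eqref{eq:h5}), whose leading coefficient is, up to a manifestly positive factor in $x_1,x_6$, a cubic $\alpha_0 x_6^3+\alpha_1 x_6^2+\alpha_2 x_6+\alpha_3$ in $x_6$, with $\alpha_0$ in turn a cubic in $\rho$ whose leading coefficient $\beta_0$ is exactly a negative constant times $k_{10}k_3k_4-(k_3+k_4)(k_3+k_7)(k_4+k_7)$ times a positive quantity. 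The nested limit is then: $\rho$ large makes $\alpha_0<0$; $x_6$ large makes the $x_2^9$-coefficient negative; $x_2$ large (with $x_1$ fixed, say $x_1=1$) makes $N<0$. So both $x_6$ and $x_2$ must go to infinity, in that order, after $\rho$ is chosen large.
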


\begin{proof}
Assume that the rate constants satisfy $k_2 = k_6 = k_9=:k_b$ and~\eqref{eq:kc_condi}.
By the proof of Proposition~\ref{prop:surface},
 a steady state $\chi(x_1,x_2,x_6)$
of the mixed-mechanism system~\eqref{eq:OEs-mixed} 
  is locally stable if and only if  $\det H_5(x_1,x_2,x_6) >0$.  
  We also saw in that proof that the denominator of $\det H_5(x_1,x_2,x_6)$ is strictly positive for all 
  $(x_1,x_2,x_6) \in \mathbb{R}^3_{>0}$.  
  So, by Proposition~\ref{prop:routh-hurwitz}, it suffices to show that if $k_5/k_8$ is sufficiently large,
  then there exists $(x^*_1,x^*_2,x^*_6) \in \mathbb{R}^3_{>0}$ such that
   the numerator of 
   $\det H_5(x^*_1,x^*_2,x^*_6)$ is strictly negative: this would show that
   the steady state $x^*:=\chi(x^*_1,x^*_2,x^*_6)$ is unstable.
  
  To this end, 
  view the numerator of $\det H_5$ as a polynomial in $x_2$
with coefficients in $x_1$, $x_6$, and the $k_i$'s. It is a degree-$9$  
polynomial in $x_2$ of the following form (see the file \texttt{mixed\_analysis\_H5N\_x2\_LT.nb}): 
\begin{align}\label{eq:h5}
 {\rm numerator} (\det H_5) ~&=~ 
  	k_1^9\left( \alpha_0 x_6^3 + \alpha_1 x_6^2 + \alpha_2 x_6 +
    \alpha_3 \right) 
    \left( x_1^5 + %B
    \frac{k_{10} k_{7} (k_{3}+k_{b})}{k_{3} (k_{10} (2
    k_{7}+k_{b})+k_{7} k_{b})} 
    x_1^4 x_6 \right)  x_2^9 
  \notag
  \\  
      & \quad + \text{lower degree terms}~,
\end{align}
where $\alpha_0$, \ldots, $\alpha_3$ 
are rational functions in $k_b, k_3, k_4, k_5, k_7, k_8, k_{10}$.  
These functions $\alpha_i$ are given in \texttt{mixed\_analysis\_H5N\_x2\_LT.nb}.

We now analyze $\alpha_0$, which has the following form (see  \texttt{mixed\_analysis\_H5N\_x2\_LT.nb}):
\begin{align} \label{eq:alpha-0}
  \alpha_0 = k_8^3
  	\left(\beta_{0} \left(\frac{k_5}{k_8}\right)^3 +
    \beta_{1} \left(\frac{k_5}{k_8}\right)^2 + \beta_{2}
    \left(\frac{k_5}{k_8}\right) + \beta_{3} \right)~,
\end{align}
where each coefficient $\beta_i$ is a rational function in $k_b, k_3, k_4, k_7, k_{10}$ (and hence does not depend on $k_1$, $k_5$, or $k_8$).  In particular, $\beta_0$ is the following polynomial: 
\begin{align*} 
  \beta_{0} &~=~ -k_{1}^9 k_{3}^5 k_{7}^3 ~
  	(k_{10} k_{3} k_{4}-(k_{3}+k_{4}) (k_{3}+k_{7}) (k_{4}+k_{7}))~
              (k_{10}+k_{b})^3 ~ (k_{7} k_{b}+k_{10} (2 k_{7}+k_{b}))^2~.
            \end{align*}               
It follows that $\beta_0<0$ when inequality~\eqref{eq:kc_condi} holds.

Thus, when~\eqref{eq:kc_condi} holds, then, by equation~\eqref{eq:alpha-0}, 
the inequality 
$\alpha_0<0$ holds for $k_5/k_8$ sufficiently large. 
In this case, the cubic polynomial in $x_6$ appearing in~\eqref{eq:h5}, and hence also the coefficient of $x_2^9$ in the numerator of $\det H_5$, will be negative
for $x_6$ sufficiently large.  Hence, if we choose $x_1:=1$ (or any positive value) and $x_6$ and $x_2$ sufficiently large, then the numerator of $\det H_5$ will be negative.
\end{proof}

In the remainder of this section, 
we focus on the question of whether the surface $\mathcal H$ consists of (at least generically) Hopf bifurcations.  If so, this would imply that whenever a steady state of the mixed-mechanism network switches from stable to unstable, we expect it to undergo a Hopf bifurcation leading to oscillations.  We begin our analyses of Hopf bifurcations by 
giving a criterion  for such bifurcations.

\begin{proposition} \label{prop:hopf}
Consider the dynamical system~\eqref{eq:OEs-mixed} arising from the mixed-mechanism network and any positive rate constants with $k_2 = k_6 = k_9$ and 
  $k_{10} k_{3} k_{4}-(k_{3}+k_{4}) (k_{3}+k_{7}) (k_{4}+k_{7}) > 0.$
Then there exists $(x_1^*, x_2^*, x_6^*) \in \mathbb{R}^3_{>0}$ such that  
$\det H_5(x_1^*, x_2^*, x_6^*)=0$ (in other words, $ \phi \circ \chi (x_1^*, x_2^*, x_6^*)$ is on $\mathcal{H}$).  
Moreover, for such a vector $(x_1^*, x_2^*, x_6^*) $,
the system undergoes a Hopf bifurcation with respect to $x_2$ at the steady state 
$\chi(x_1^*,x_2^*,x_6^*)$ if and only if the following inequality holds:
\begin{align} \label{eq:partial-nonzero}
	\frac{d( {\rm numerator} (\det H_5)|_{x_1=x_1^*,~x_6=x_6^*})}{dx_2} | _{x_2=x_2^*} 
	~\neq~ 0~.
\end{align}
\end{proposition}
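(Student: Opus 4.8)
\emph{Plan.} Everything will follow from Yang's criterion (Proposition~\ref{prop:yang}), since essentially all of the sign data it requires has already been extracted in the proof of Proposition~\ref{prop:surface}. Given a triple $(x_1^*,x_2^*,x_6^*)\in\mathbb{R}^3_{>0}$ with $\det H_5(x_1^*,x_2^*,x_6^*)=0$ (to be produced below), I fix $x_1=x_1^*$ and $x_6=x_6^*$ and take $\mu:=x_2$ as the bifurcation parameter. Then $\mu\mapsto\chi(x_1^*,\mu,x_6^*)$ is a curve of positive steady states, polynomial (hence smooth) in $\mu$ because the only denominators occurring in $\chi$ involve $x_6$; it sweeps through the compatibility classes with total amounts $\phi\circ\chi(x_1^*,\mu,x_6^*)$, which also vary smoothly with $\mu$. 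The family $g_\mu$ to which Yang's criterion is applied is the mass-action vector field restricted to the $6$-dimensional stoichiometric subspace $\St$ with those total amounts; this is smooth in $\mu$, and — exactly as in the proof of Proposition~\ref{prop:surface}, where $p(\lambda)$ splits as $\lambda^3(\lambda^6+b_1\lambda^5+\cdots+b_6)$ — the relevant characteristic polynomial is $p_\mu(\lambda)=\lambda^6+b_1(\mu)\lambda^5+\cdots+b_6(\mu)$ with $b_i(\mu)=b_i(x_1^*,\mu,x_6^*)$, so we are in the setting of Proposition~\ref{prop:yang} with $n=6$.

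\emph{Producing a point on $\mathcal{H}$.} This is the one substantively new step, and the main obstacle. From the proof of Proposition~\ref{prop:surface}, $\det H_5(x_1,x_2,x_6)$ is positive once $x_1$ is large (with $x_2,x_6$ fixed). Conversely, by the computation in the proof of Theorem~\ref{thm:hopf}, hypothesis~\eqref{eq:kc_condi} forces the leading coefficient $\beta_0$ of $\alpha_0$ (viewed as a polynomial in $k_5/k_8$, via~\eqref{eq:alpha-0}) to be negative, whence $\alpha_0<0$ once $k_5/k_8$ is large, whence the coefficient of $x_2^9$ in $\mathrm{numerator}(\det H_5)$ in~\eqref{eq:h5} is negative for $x_6$ large; so $\det H_5$ is negative at suitable points of $\mathbb{R}^3_{>0}$ (this is the content of Theorem~\ref{thm:hopf}). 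As $\mathbb{R}^3_{>0}$ is connected and $\det H_5$ is continuous there — its denominator $k_{10}^5 k_7^5 (k_3+k_b)^5 x_6^5$ never vanishes on $\mathbb{R}^3_{>0}$ — the intermediate value theorem along any path joining a negative point to a positive point yields $(x_1^*,x_2^*,x_6^*)$ with $\det H_5(x_1^*,x_2^*,x_6^*)=0$, and then $\phi\circ\chi(x_1^*,x_2^*,x_6^*)\in\mathcal{H}$ by Proposition~\ref{prop:param-comp-classes}.

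\emph{Checking Yang's conditions and concluding.} At $\mu_0:=x_2^*$: condition~(i) of Proposition~\ref{prop:yang} reads $b_6(x_1^*,x_2^*,x_6^*)>0$, and condition~(ii) reads $\det H_1(\mu_0),\dots,\det H_4(\mu_0)>0$; both hold automatically, since $b_6>0$ and $\det H_1,\dots,\det H_4>0$ for all positive arguments — this is precisely the positivity proved in Proposition~\ref{prop:surface}. For condition~(iii), $\det H_5(\mu_0)=0$ by the previous paragraph, while along the curve the denominator of $\det H_5$ is the positive constant $k_{10}^5 k_7^5 (k_3+k_b)^5 (x_6^*)^5$ (it does not involve $x_2$), so the quotient rule gives $\frac{d}{d\mu}\det H_5(\mu)\big|_{\mu_0}\ne 0$ if and only if $\frac{d}{dx_2}\big(\mathrm{numerator}(\det H_5)|_{x_1=x_1^*,\,x_6=x_6^*}\big)\big|_{x_2=x_2^*}\ne 0$, which is exactly~\eqref{eq:partial-nonzero}. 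Hence, by Yang's criterion, the curve undergoes a simple Hopf bifurcation at $\chi(x_1^*,x_2^*,x_6^*)$ with respect to $x_2$ if and only if~\eqref{eq:partial-nonzero} holds. The only genuinely new work is the middle step; the rest is bookkeeping against facts already in hand, modulo the one subtlety that $x_2$ is not a parameter of the ODE itself but of the family of compatibility classes, which is why $g_\mu$ is taken to be the reduced vector field with total amounts $\phi\circ\chi(x_1^*,\mu,x_6^*)$.
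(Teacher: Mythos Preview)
Your proof is correct and follows essentially the same route as the paper: existence of a zero of $\det H_5$ via the intermediate value theorem, drawing the positive sign from the proof of Proposition~\ref{prop:surface} and the negative sign from the proof of Theorem~\ref{thm:hopf}, followed by Yang's criterion using the already-established positivity of $b_6$ and $\det H_1,\dots,\det H_4$ together with the $x_2$-independence of the denominator of $\det H_5$. (Note that your negativity argument, like the paper's, appeals to the proof of Theorem~\ref{thm:hopf}, which uses $k_5/k_8$ large even though that hypothesis is not part of the present proposition; this is a subtlety you share with the paper rather than a divergence from it.)
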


\begin{proof} 
Fix positive rate constants for which $k_2 = k_6 = k_9$ and 
  $k_{10} k_{3} k_{4}-(k_{3}+k_{4}) (k_{3}+k_{7}) (k_{4}+k_{7}) > 0.$
  By the proofs of Proposition~\ref{prop:surface} and Theorem~\ref{thm:hopf}, 
  the function 
  $\det H_5: \mathbb{R}^3_{>0} \to \mathbb{R}$ takes both positive and negative values.
So, as 
$\det H_5$ is continuous, 
$\det H_5(x_1^*, x_2^*, x_6^*)=0$
for some 
$(x_1^*, x_2^*, x_6^*) \in \mathbb{R}^3_{>0}$ (by the intermediate-value theorem).

Assume  $\det H_5(x_1^*, x_2^*, x_6^*)=0$.  
To see  whether the steady state $\chi(x_1^*,x_2^*,x_6^*)$ is a Hopf bifurcation with respect to the parameter $\mu=x_2$, where the curve of steady states is 
$x(\mu)= \chi(x_1^*,\mu,x_6^*)$ and $\mu_0=x^*_2$, 
we use Proposition~\ref{prop:yang} (Yang's criterion).  
Parts (i) and (ii) of that criterion hold for {\em any} steady state $\chi(x_1^*,x_2^*,x_6^*)$, because 
$b_6=b_6(x_1^*,x_2^*,x_6^*)>0$, by~\eqref{eq:b6}, and also $\det H_i= \det H_i(x_1^*,x_2^*,x_6^*)>0$ for $i=1,2,3,4$ (from the proof of Proposition~\ref{prop:surface}).
Recall from the proof of Proposition~\ref{prop:surface} that the denominator of $\det H_5$ is strictly positive and does not depend on $x_2$; thus, we can focus on the numerator of $H_5$.  
So, 
by Proposition~\ref{prop:yang}, 
$\chi(x_1^*,x_2^*,x_6^*)$ is a Hopf bifurcation with respect $x_2$
if and only if \eqref{eq:partial-nonzero} holds.
\end{proof}

\begin{remark} \label{rmk:perturb}
Given  rate constants $k_i$ as in Proposition~\ref{prop:hopf} 
for which there is a Hopf bifurcation, we can perturb slightly the rate constants involved in (\ref{eq:kc_condi}) 
(while maintaining the equality $k_2 = k_6 = k_9$) and preserve the existence of a Hopf bifurcation.  
Indeed, this assertion follows from Proposition~\ref{prop:hopf} (inequality~\eqref{eq:partial-nonzero} is maintained under small perturbations of the $x_i$'s), 
the fact that simple roots of a polynomial depend continuously -- in fact, infinitely differentiably -- 
on the coefficients \cite{lc12}, and the fact that the inequality (\ref{eq:kc_condi}) defines a (relatively) open set in the parameter space of the $k_i$'s.
\end{remark}

Under the hypotheses of Proposition~\ref{prop:hopf},
we expect that inequality~\eqref{eq:partial-nonzero} holds generically on $\mathcal H$.  
We will confirm this when the rate constants are those in Table~\ref{tab:rates} 
(Theorem~\ref{thm:main}).

The proof of Theorem~\ref{thm:main} %.3 
makes use of discriminants, which we now review.  
Consider a degree-$n$, univariate polynomial $f=c_n x^n+c_{n-1}x^{n-1}+ \cdots + c_0$ with coefficients $c_i \in \mathbb{C}$.  A {\em multiple root} of $f$ is some $x^* \in \mathbb{C}$ for which $(x-x^*)^2$ divides $f$ or equivalently $f(x^*)=f'(x^*)=0$.  It is well-known that $f$ has a multiple root in $\mathbb{C}$ if and only if a certain multivariate polynomial in the $c_i$'s, the {\em discriminant}, vanishes~\cite{Gelfand:Kapranov:Zelevinsky}.  For instance, the discriminant of the quadratic polynomial $ax^2+bx+c$ is the familiar expression $b^2-4ac$.

\begin{theorem}[Hopf bifurcations of the mixed-mechanism network] \label{thm:main}
  Consider the dynamical system~\eqref{eq:OEs-mixed} arising from the
  mixed-mechanism network and rate constants in Table~\ref{tab:rates}.
  Let $\mathcal{H}$ denote the surface, from
  Proposition~\ref{prop:surface}, %.2(b),  
  that defines the border between those $\invtPoly$ whose unique
  steady state $x^*$ is locally stable and those $\invtPoly$ for which
  $x^*$ is unstable.  Then  $\mathcal{H}$ consists generically of
  compatibility classes $\invtPoly$  whose unique steady state  $x^*$
  undergoes a simple Hopf bifurcation (with $x_2$ as
  bifurcation parameter).
\end{theorem}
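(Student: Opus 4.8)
The plan is to apply Proposition~\ref{prop:hopf} with the specific rate constants from Table~\ref{tab:rates}, since those constants do satisfy $k_2=k_6=k_9=1$ and the inequality $k_{10}k_3k_4 - (k_3+k_4)(k_3+k_7)(k_4+k_7) = 100 - 2\cdot 1.9 \cdot 1.9 = 100 - 7.22 > 0$. Proposition~\ref{prop:hopf} already tells us that $\mathcal H$ is nonempty in the positive orthant and that, at a point $\phi\circ\chi(x_1^*,x_2^*,x_6^*)$ of $\mathcal H$, the steady state undergoes a simple Hopf bifurcation with respect to $x_2$ \emph{if and only if} $\frac{d}{dx_2}\bigl(\mathrm{numerator}(\det H_5)|_{x_1=x_1^*,x_6=x_6^*}\bigr)|_{x_2=x_2^*}\neq 0$. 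So the entire content of the theorem reduces to showing that the subset of $\mathcal H$ where this derivative \emph{vanishes} has dimension at most $1$ (measure zero inside the surface $\mathcal H$).

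First I would set $N(x_1,x_2,x_6) := \mathrm{numerator}(\det H_5)$, a polynomial in $x_1,x_2,x_6$ once the $k_i$ are specialized to Table~\ref{tab:rates}. The ``bad'' set is $\mathcal B = \{(x_1,x_2,x_6)\in\mathbb R^3_{>0} : N = 0 \text{ and } \partial N/\partial x_2 = 0\}$, i.e.\ the points where, viewing $N$ as a univariate polynomial in $x_2$ with coefficients depending on $x_1,x_6$, the value $x_2^*$ is a \emph{multiple root}. The key step is the discriminant trick reviewed just before the theorem: for fixed $(x_1,x_6)$, $N(x_1,\cdot,x_6)$ has a multiple root in $x_2$ if and only if $\mathrm{Disc}_{x_2}(N)(x_1,x_6) = 0$. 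This discriminant is a polynomial in $x_1,x_6$ (and the fixed $k_i$). Provided it is not identically zero, its vanishing locus is a curve in the $(x_1,x_6)$-plane, and over each such $(x_1,x_6)$ there are finitely many multiple roots $x_2$; hence $\mathcal B$ is contained in a one-dimensional set. Pushing forward by the bijection $\phi\circ\chi$ (Proposition~\ref{prop:param-comp-classes}), which is a diffeomorphism and hence preserves dimension, we conclude that the non-Hopf part of $\mathcal H$ has dimension at most~$1$, which is exactly the generic statement. I would then note that on the complement $\mathcal H \setminus \mathcal B$, all three conditions of Yang's criterion (via Proposition~\ref{prop:hopf}) hold, so a simple Hopf bifurcation occurs there.

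The one thing that must be verified by an explicit (but routine) computation is that $\mathrm{Disc}_{x_2}(N) \not\equiv 0$ as a function of $(x_1,x_6)$ — equivalently, that $N$, as a polynomial in $x_2$, does not have a multiple root for a Zariski-dense set of $(x_1,x_6)$. I expect this to be the main obstacle in the sense that it is the only place genericity could fail; it is handled by exhibiting a single point $(x_1^0,x_6^0)$ at which $N(x_1^0,\cdot,x_6^0)$ is squarefree, or equivalently by evaluating the discriminant polynomial at one point and checking it is nonzero — for instance, one can use the point underlying Example~\ref{ex:param} / the numerical continuation data of Figure~\ref{fig:1-para-conti}(c), where $\Stot \approx 21.82$ and $43.59$ give two \emph{distinct} Hopf values of $x_2$ for the same $(x_1,x_6)$-slice, already indicating the root structure is nondegenerate there. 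This computation, together with the relevant {\tt Mathematica} file, certifies $\mathrm{Disc}_{x_2}(N)\not\equiv 0$.

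A small auxiliary point to address: I should confirm that the bad locus $\mathcal B$ really is contained in $\mathcal H$ (so that we are only throwing away a measure-zero \emph{subset} of $\mathcal H$, not missing part of $\mathcal H$ entirely) — but this is immediate, since $\mathcal B \subseteq \{N = 0\}$ and $\mathcal H$ is by definition the image under $\phi\circ\chi$ of $\{N=0\}$ (the denominator of $\det H_5$ being strictly positive, as shown in the proof of Proposition~\ref{prop:surface}). Finally, I would remark that the same argument works verbatim for any rate constants satisfying the hypotheses of Proposition~\ref{prop:hopf} for which the analogous discriminant is not identically zero, so Table~\ref{tab:rates} is just a convenient concrete instance; by Remark~\ref{rmk:perturb} the Hopf phenomenon is in any case stable under small perturbations of the $k_i$.
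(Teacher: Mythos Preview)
Your approach is essentially identical to the paper's: reduce via Proposition~\ref{prop:hopf} to showing the ``bad'' set $\mathcal B=\{N=0,\ \partial N/\partial x_2=0\}$ has dimension at most~$1$, observe that $\mathcal B$ projects to the zero locus of $\mathrm{Disc}_{x_2}(N)$ in the $(x_1,x_6)$-plane with finite fibers, and then verify that this discriminant is not identically zero by exhibiting one squarefree specialization. The paper carries out that last step cleanly by setting $(x_1,x_6)=(1,1)$ and checking (in {\tt Mathematica}) that the degree-$9$ polynomial $\mathrm{numerator}(\det H_5)(1,x_2,1)$ has nine distinct complex roots.

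One small correction: your heuristic invoking Figure~\ref{fig:1-para-conti}(c) does not do the job. In that figure the bifurcation parameter is $\Stot$ (with $\Ktot,\Ptot$ fixed), and as $\Stot$ varies the preimage under $\phi\circ\chi$ moves in all three coordinates $x_1,x_2,x_6$; the two Hopf points at $\Stot\approx 21.82$ and $43.59$ therefore lie on \emph{different} $(x_1,x_6)$-slices, so they say nothing about squarefreeness of $N(x_1^0,\cdot,x_6^0)$ for a fixed $(x_1^0,x_6^0)$. Your general prescription (``evaluate the discriminant at one point'') is correct; just replace the Figure~\ref{fig:1-para-conti}(c) justification with a direct check such as the paper's $(1,1)$ computation.
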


\begin{proof}
It is straightforward to check that the rate constants in Table~\ref{tab:rates}
satisfy the inequality~\eqref{eq:kc_condi}.  Therefore, the surface~$\mathcal H$
as in 
Proposition~\ref{prop:surface}.2(b) exists, and is defined by $\det H_5=0$, where
$H_5$ is the Hurwitz matrix (specialized to the rate constants in Table~\ref{tab:rates}) as in the proof of Proposition~\ref{prop:surface}.

To prove that $\mathcal H$ consists generically of Hopf bifurcations, we use
Proposition~\ref{prop:hopf}.  That result states that
$\chi(x_1^*,x_2^*,x_6^*)$ is a Hopf bifurcation with respect to $x_2$ if and only if  
$(x^*_1,x^*_2,x^*_6) \in \mathcal{H}' \setminus \mathcal{S}$, where 
\begin{align*}
	\mathcal{H}' ~&:=~
		V_{>0}(\det H_5) ~:=~ \left\{ (x_1,x_2,x_6) \in \mathbb{R}^3_{>0} \mid 
		\det H_5(x_1,x_2,x_6)=0 \right\}~,~ {\rm and} \\
	\mathcal{S}  ~&:=~   \left\{ (x^*_1,x^*_2,x^*_6) \in \mathcal{H}' ~~\middle|~~ 
		\frac{d(\det H_5|_{x_1=x_1^*,~x_6=x_6^*})}{dx_2} | _{x_2=x_2^*}=0
			\right\}~\subseteq \mathcal{H}' ~. 
\end{align*}
We have that $\mathcal{H}=\phi \circ \chi(\mathcal{H}')$, and 
that the following subset of $\mathcal{H}$ consists of compatibility
classes whose unique steady state undergoes a simple Hopf
bifurcation with $x_2$ as bifurcation parameter:
$\phi \circ \chi(\mathcal{H}' \setminus \mathcal{S})$. 
So, 
it suffices to show that $\dim (\mathcal{S})< \dim (\mathcal{H}')$.
Note that $\dim (\mathcal{H}') \geq 2$, so we will show that $\dim (\mathcal{S}) \leq 1$.

 To this end, note that if $(x_1^*,x_2^*,x_6^*) \in \mathcal{S}$, 
 then $x_2^*$ is a multiple root of the univariate polynomial 
 ${\rm numerator}(\det H_5)|_{x_1=x_1^*,~x_6=x_6^*}$ (this also uses the fact the denominator of 
$\det H_5$, which is $188956800000000000000 x_6^5$, does not depend on $x_2$).
 Thus, any $(x_1^*,x_2^*,x_6^*) \in \mathcal{S}$ satisfies $D(x_1^*,x_6^*)=0$, where $D$ is the discriminant of $\det H_5$ and $H_5$ is viewed as a univariate polynomial in the variable $x_2$.  
So, we have the map:
	\begin{align*}
	\mathcal{S} \quad  &\to \quad \{ (x_1,x_6) \in \mathbb{R}^2 \mid D(x_1,x_6)=0\} ~=:~ \mathcal{D} \\ 
    (x_1,x_2,x_6) \quad &\mapsto \quad (x_1,x_6) ~.
    \notag
	\end{align*}
The preimage of any point of this map has size at most 4 
(because ${\rm numerator}(\det H_5)|_{x_1=x_1^*,~x_6=x_6^*}$ has degree 9, so it has at most 4 multiple roots). 

Thus, to achieve our desired inequality (namely, $\dim (\mathcal{S}) \leq 1$), 
we need only prove the following claim:
 $ \dim (\mathcal{D}) \leq 1$
or, equivalently, the bivariate polynomial $D$ is not the zero polynomial.  It suffices to show that $D(1,1)$ is nonzero, which in turn would follow if we can show that the univariate, 
degree-9 polynomial ${\rm numerator}(\det H_5)|_{x_1=x_1^*,~x_6=x_6^*}=H_5(1,x_2,1)$ does {\em not} have a multiple root over $\mathbb{C}$.  
Indeed, using {\tt Mathematica}, we see that the numerator of $\det
H_5(1,x_2,1)$ has 9 (distinct) complex roots:
\begin{align*}
-131.425, ~ -102.999, ~ 
     -78.022, ~ -66.423,  ~ 
     -39.194, ~ 
     -3.946 \pm 0.734 i, ~ 
     -3.677, ~ 268.606~.
\end{align*}
Thus, $D$ is a nonzero polynomial, and this completes the proof.
\end{proof}

% ---------------------------
% figure: SLICES OF H
% ---------------------------
\begin{figure}[ht]

  \begin{subfigure}{0.3\textwidth}
    \includegraphics[width=\textwidth]{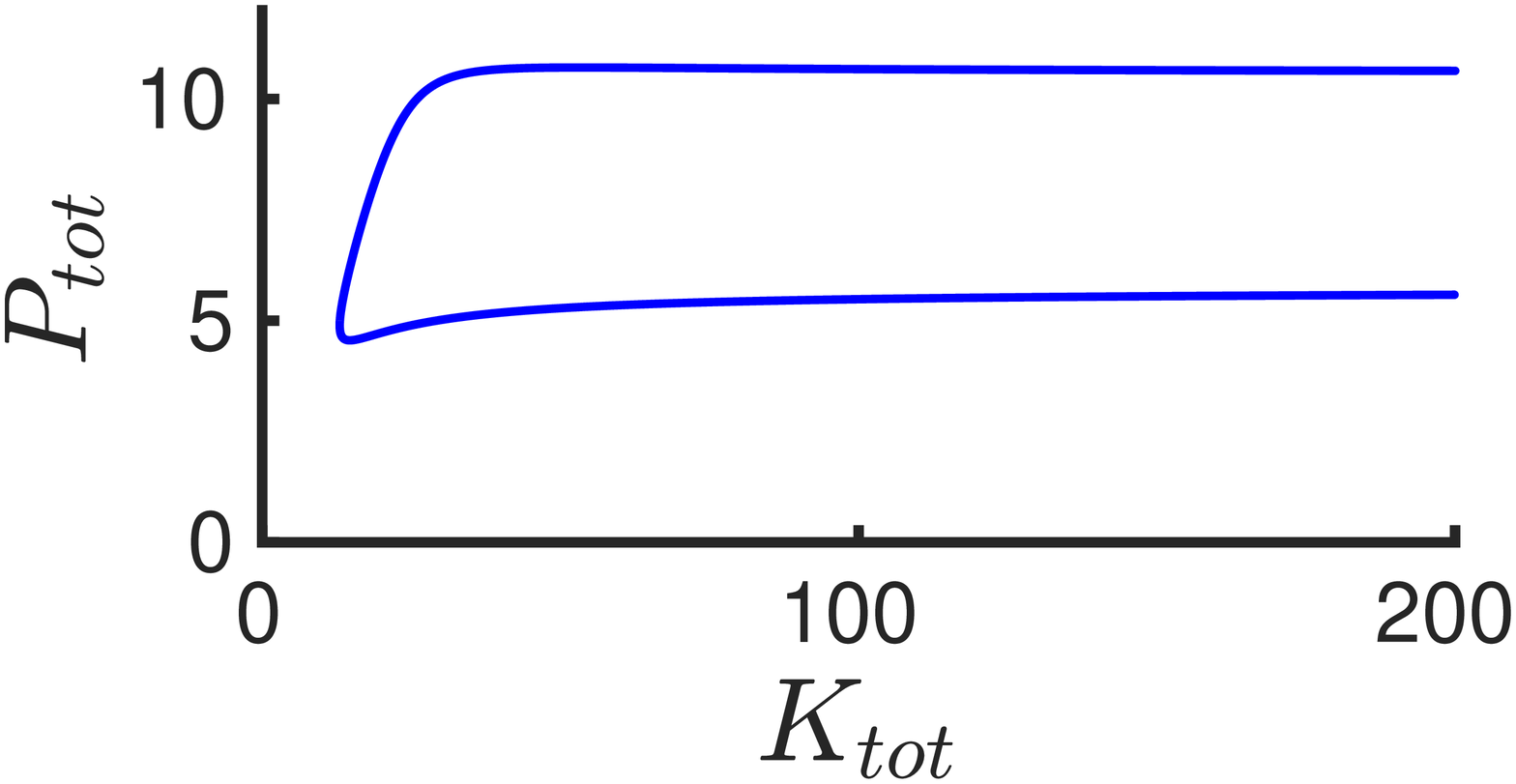}
    \subcaption{
      $\Stot =  40$.
    }
  \end{subfigure}
  \hfill
  \begin{subfigure}{0.3\textwidth}
    \includegraphics[width=\textwidth]{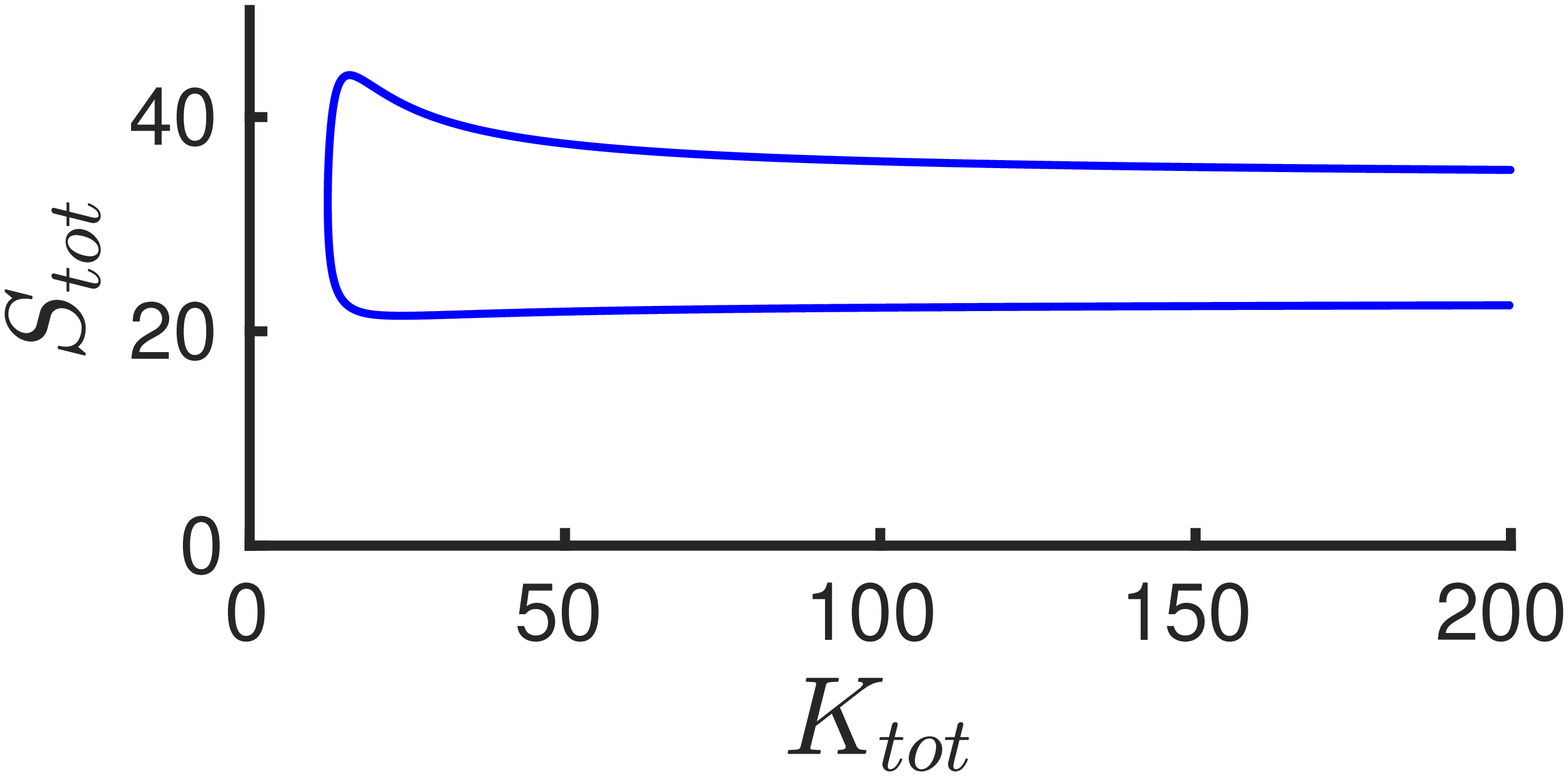}
    \subcaption{
      $\Ptot = 5$. 
    }    
  \end{subfigure}
  \hfill
  \begin{subfigure}{0.3\textwidth}
    \includegraphics[width=\textwidth]{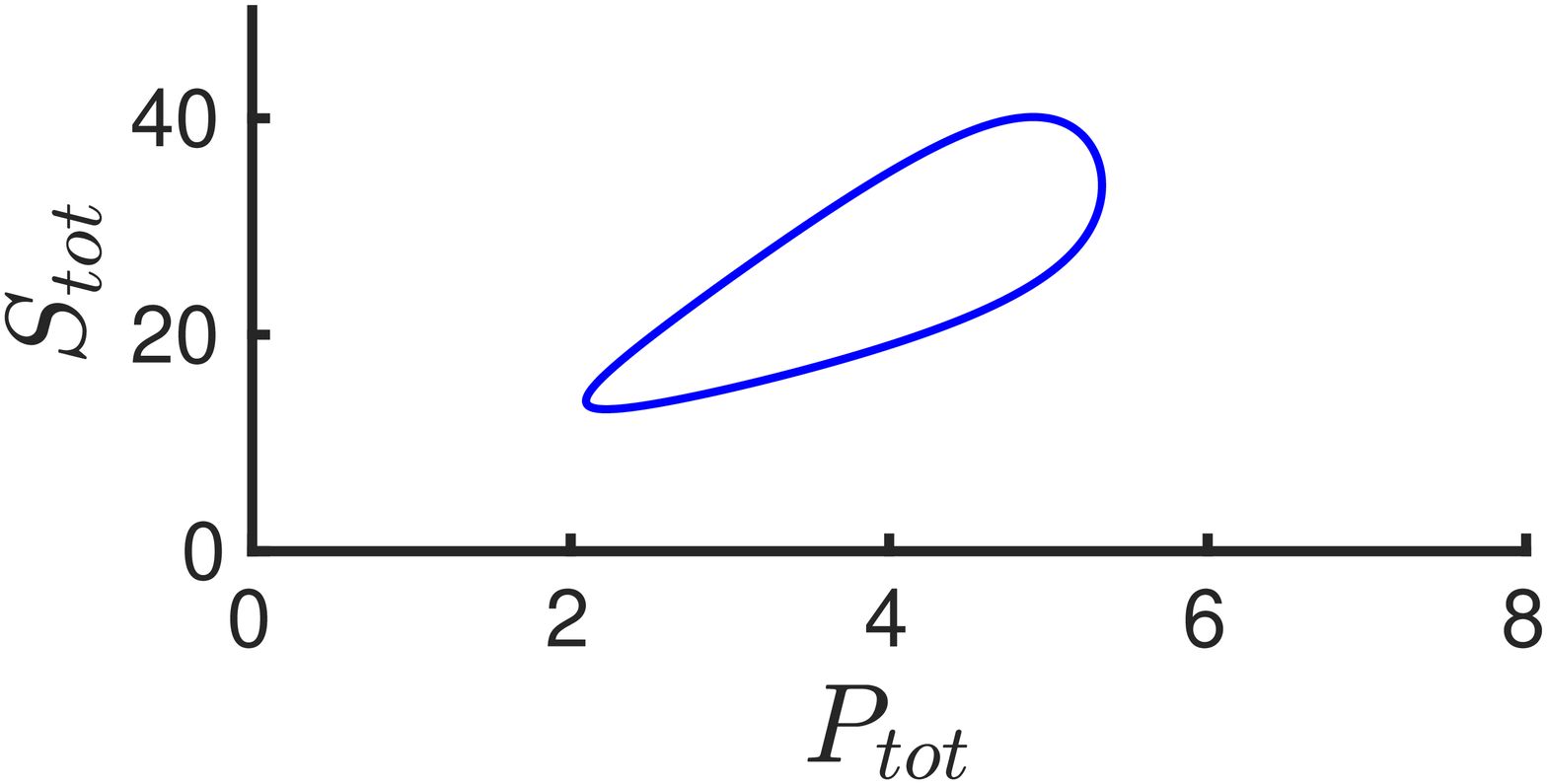}
    \subcaption{
      $\Ktot \approx
    13.0296$. 
    }    
  \end{subfigure}
  \caption{
    \label{fig:2-para-conti}
    Slices of the Hopf-bifurcation surface $\mathcal{H}$, from
    Theorem~\ref{thm:main}.  Specifically, displayed are the
    intersections of $\mathcal{H}$ with the hyperplanes defined by
    (a) $\Stot =  40$, (b) $\Ptot = 5$, and (c) $\Ktot \approx
    13.0296$.
	Each such curve was obtained numerically, using {\tt Matcont}~\cite{matcont}, by a two-parameter
	continuation of the Hopf bifurcation arising from 
	$\Ktot \approx 13.0296$, $\Ptot = 5$, and $\Stot=40$.
        Each point of the curves in (a) -- (c) corresponds to a Hopf
    bifurcation with respect to either of the two varying total
    concentrations. 
 Points ``inside'' $\mathcal{H}$ correspond to unstable steady states and thus the potential for oscillations.
  }
\end{figure}

In Figure~\ref{fig:2-para-conti}, we show some slices of the
Hopf-bifurcation surface $\mathcal{H}$ (where the rate constants are from Table~\ref{tab:rates}). 
Accordingly, this figure extends the one-dimensional Figure~\ref{fig:hopf}.

The bifurcations analyzed in Proposition~\ref{prop:hopf}
and Theorem~\ref{thm:main} 
are with respect to the bifurcation parameter $x_2$, the steady-state value of the kinase $K$.  It is natural to ask whether we also obtain a bifurcation with respect to a more biologically meaningful parameter, such as a rate constant or a total amount.  We now explain how to perform such an analysis.

To use a total amount (here we use $P_{\rm tot}$) as a bifurcation parameter 
(perturbing this parameter corresponds to perturbing the compatibility class),  consider the following maps:
\begin{align*}
\{\left( \Ktot,~\Ptot,~\Stot \right) \}	= \mathbb{R}^{3}_{>0} ~		&\overset{\phi \circ \chi}\longleftarrow \quad  \mathbb{R}_{>0}^{3} \quad 
	 \overset{h_5:= \det H_5}\longrightarrow ~
	\mathbb{R}_{>0}
	\end{align*}
Recall that $ (\phi \circ \chi):	\mathbb{R}^{3}_{>0} \to \mathbb{R}^3_{>0}$ is a bijection.
Let $\mathfrak g :=h_5 \circ (\phi \circ \chi)^{-1}: \mathbb{R}^{3}_{>0} \to \mathbb{R}$. 
Also, let $p:=(\phi \circ \chi)_{2}=x_6+\frac{1009}{1800}x_1 x_2$ denote the second coordinate function of $\phi \circ \chi$ from~\eqref{eq:param-c-class} (here we assume the rate constants from Table~\ref{tab:rates}).  We are interested in checking whether $\frac{\partial \mathfrak g}{\partial \Ptot}$ is (generically) nonzero whenever $\mathfrak g =0$.  Accordingly, we use the chain rule:
\begin{align} \label{eq:sum-partials} \notag
	\frac{\partial \mathfrak g}{\partial \Ptot}
	~&=~
	 \frac{1}{\partial p / \partial x_1}\frac{\partial h_5}{\partial x_1} ~+~ 
	 \frac{1}{\partial p / \partial x_2}\frac{\partial h_5}{\partial x_2} ~+~
	 \frac{1}{\partial p / \partial x_6}\frac{\partial h_5}{\partial x_6}
		\\
	~&=~		\frac{1800}{1009 x_2} \frac{\partial h_5}{\partial x_1} ~+~ 
	\frac{1800}{1009 x_1}\frac{\partial h_5}{\partial x_2} ~+~
	\frac{\partial h_5}{\partial x_6}~.
\end{align}
For specific values of $x_1, x_2, x_6$, it is straightforward to 
check whether the sum~\eqref{eq:sum-partials} is nonzero.  More generally, 
we expect this sum to be generically nonzero; that is, we expect that the surface $\mathcal H$ consists generically of Hopf bifurcations with respect to the total-amount $\Ptot$.

%---------------------------
% general k
% --------------------------

\section{Generating rate constants admitting oscillations}
\label{sec:find-param}

The proof of 
Theorem~\ref{thm:hopf}
yields a recipe for generating rate constants 
for the mixed-mechanism network 
at which we expect oscillations arising from a Hopf bifurcation.
Specifically, 
we choose rate constants $k_i$ for which 
the equalities $k_2 = k_6 = k_9$ hold, the 
 inequality~\eqref{eq:kc_condi} holds, and
  $\alpha_0<0$ (as in~\eqref{eq:alpha-0}), 
    %, by choosing $k_5/k_8$ is sufficiently large), 
  and then pick $x_2$ and $x_6$ 
  large enough so that $\det H_5$ is negative but close to 0.
We summarize these choices in the following procedure.

%----------------------------
% PROCEDURE: witness to binomial networks 
%----------------------------
\begin{procedure}[Generating rate constants likely to admit oscillations] \label{proc:rates}
~

 \noindent
{\bf Input:} 
The following functions\footnote{The functions are provided as a text file in the Supporting Information. See Appendix~\ref{sec:A}.}: 
\begin{enumerate}[(i)]
%	(i)
	\item 
	$\alpha_0$ as in~\eqref{eq:alpha-0}, 
%	(ii)
	\item 
	 the numerator of $\det H_5$,
%	 (iii)~
	\item 	 
	$q:= \alpha_0 x_6^3 + \alpha_1 x_6^2 + \alpha_2 x_6 +
    \alpha_3$ as in~\eqref{eq:h5}, and 
%    (iv) 
	\item 
    $\phi \circ \chi$ given in Proposition~\ref{prop:param-comp-classes}.
\end{enumerate}
\noindent
{\bf Output:} Rate constants and total amounts for which $\det H_5$ is negative and close to 0.

\noindent
{\bf Steps:} 
\begin{enumerate}
	\item Choose positive values for  $k_b:=k_2=k_6=k_9$, $x_1$, $k_1$,
  $k_3$, $k_4$, $k_7$, and $k_8$. 
	\item Choose a positive value for $k_{10}$ for which 
	$ k_{10} > \frac{(k_{3}+k_{4}) (k_{3}+k_{7}) (k_{4}+k_{7})}{k_{3}
      k_{4}}. $
	\item Choose the remaining rate constant $k_{5}$ such that $\alpha_0<0$.
	\item Choose $x_6$ so that $q<0$.
	\item Choose $x_2$ so that the numerator of $\det H_5$ is negative but close to 0.
	\item 
	 Return the $k_i$'s and $(\Ktot,~\Ptot,~\Stot):=\phi \circ \chi (  x_1,x_2,x_6)$, where 
$\phi \circ \chi$ is evaluated at the $k_i$'s (and $x_1,x_2,x_6$) chosen in the previous steps.
\end{enumerate}
\end{procedure}

\begin{remark} \label{rmk:use-matcont}
Using the output of Procedure~\ref{proc:rates},
one can attempt to exhibit and analyze oscillations 
or Hopf bifurcations using software, e.g., {\tt Matcont}~\cite{matcont}.  See Figure~\ref{fig:hopf-procedure}.
\end{remark}

\begin{example} \label{ex:procedure}
We follow Procedure~\ref{proc:rates} as follows (to verify our computations see the file
\texttt{mixed\_generate\_rc.nb}):

\underline{Step 1}. We pick $k_b=0.143738$, $k_1=0.575284$, $k_3=3.89096$,
  $k_4=5.05386$, $k_7=9.25029$, $k_8=0.621813$, and $x_1=5.82148$.

\underline{Step 2}.  The inequality for this step evaluates to 
$    k_{10} > 85.5048$, so we choose $k_{10}=90$.

\underline{Step 3}. Evaluating $\alpha_0$ at the chosen $k_i$'s, we obtain the following
  inequality: 
  \begin{displaymath}
    -8.896\times 10^{17} k_{5}^3+1.49735\times 10^{20}
    k_{5}^2+4.79701\times 10^{20} k_{5}+2.42695\times 10^{20} < 0~,
  \end{displaymath}
which we find, using {\tt Mathematica}, is feasible for
$    k_5 > 171.471$.  So, we pick $k_5=172$.

\underline{Step 4}. 
 By evaluating $q$ at the values chosen above, we obtain the following inequality:
  \begin{displaymath}
    -1.41683\times 10^{22} x_{6}^3-3.5508\times 10^{25}
    x_{6}^2-1.80374\times 10^{25} x_{6}+2.15078\times 10^{24}
    < 0~.
  \end{displaymath}
 This inequality holds when 
$    x_6 > 0.0996797$, so we choose $x_6=0.1$.

\underline{Step 5}. 
By evaluating the numerator of $\det H_5$, 
  we obtain the following inequality:
  \begin{displaymath}
    \begin{split}
      &-5.42893\times 10^{25} x_{2}^9-4.20944\times 10^{29}
      x_{2}^8-5.05393\times 10^{31} x_{2}^7-6.67609\times 10^{32}
      x_{2}^6 \\
      &\quad 
      +4.66164\times 10^{33} x_{2}^5+3.97617\times 10^{34}
      x_{2}^4+1.01289\times 10^{35} x_{2}^3+1.19894\times 10^{35}
      x_{2}^2 \\
      &\quad 
      +6.7831\times 10^{34} x_{2}+1.4718\times 10^{34} < 0~.
    \end{split}
  \end{displaymath}
This inequality is feasible, as computed in {\tt Mathematica},  for
$    x_2 > 9.0382$; we pick $x_2=10$.
  
\underline{Step 6}. 
We have determined the following rate constants:
\begin{table}[!h]
  \centering
  \begin{tabular}{|c|c|c|c|c|c|c|c|c|c|} \hline
    $k_{1}$ &  $k_{2}$ &  $k_{3}$ & $k_{4}$ &  $k_{5}$ &  $k_{6}$ & $k_{7}$ & $k_{8}$ &  $k_{9}$ &  $k_{10}$
     \\ \hline 
    0.575284 & 0.143738 & 3.89096 & 5.05386 & 172 & 0.143738 & 9.25029 & 0.621813 & 0.143738 & 90
    \\ \hline
  \end{tabular}
\end{table}

We obtain the following steady state, using~\eqref{eq:param}:
\begin{align} \label{eq:steady-state-procedure}
(x_1,x_2,\dots, x_9)
	~&=~\chi(x_1,x_2,x_6) \\ \notag
	~&=~   
	( 5.82148 ,~ 10 ,~ 8.30052 ,~ 6.39056 ,~ 1.90691 ,~ 0.1 ,~ 3.49146 ,~ 520.229 ,~ 0.358855)~.
\end{align}
Using this steady state, we obtain the total amounts, using~\eqref{eq:phi}:
\begin{align} \label{eq:totals-procedure}
(\Ktot,~\Ptot,~\Stot) ~&=~ \phi(x_1,x_2,\dots,x_9)
%~=~ 
%( x_2 + x_3 + x_4,~
%	 x_6+x_7+x_9  ,~ x_1+x_3+x_4+x_5+x_7+x_8+x_9 ) \\
	 ~=~
	 (24.6911,~ 3.95031,~546.499)~.
\end{align}
The resulting bifurcation analysis is shown in Figure~\ref{fig:hopf-procedure}.
\end{example}

\begin{figure}[h] 

  \begin{subfigure}{0.3\textwidth}
    \includegraphics[width=\textwidth]{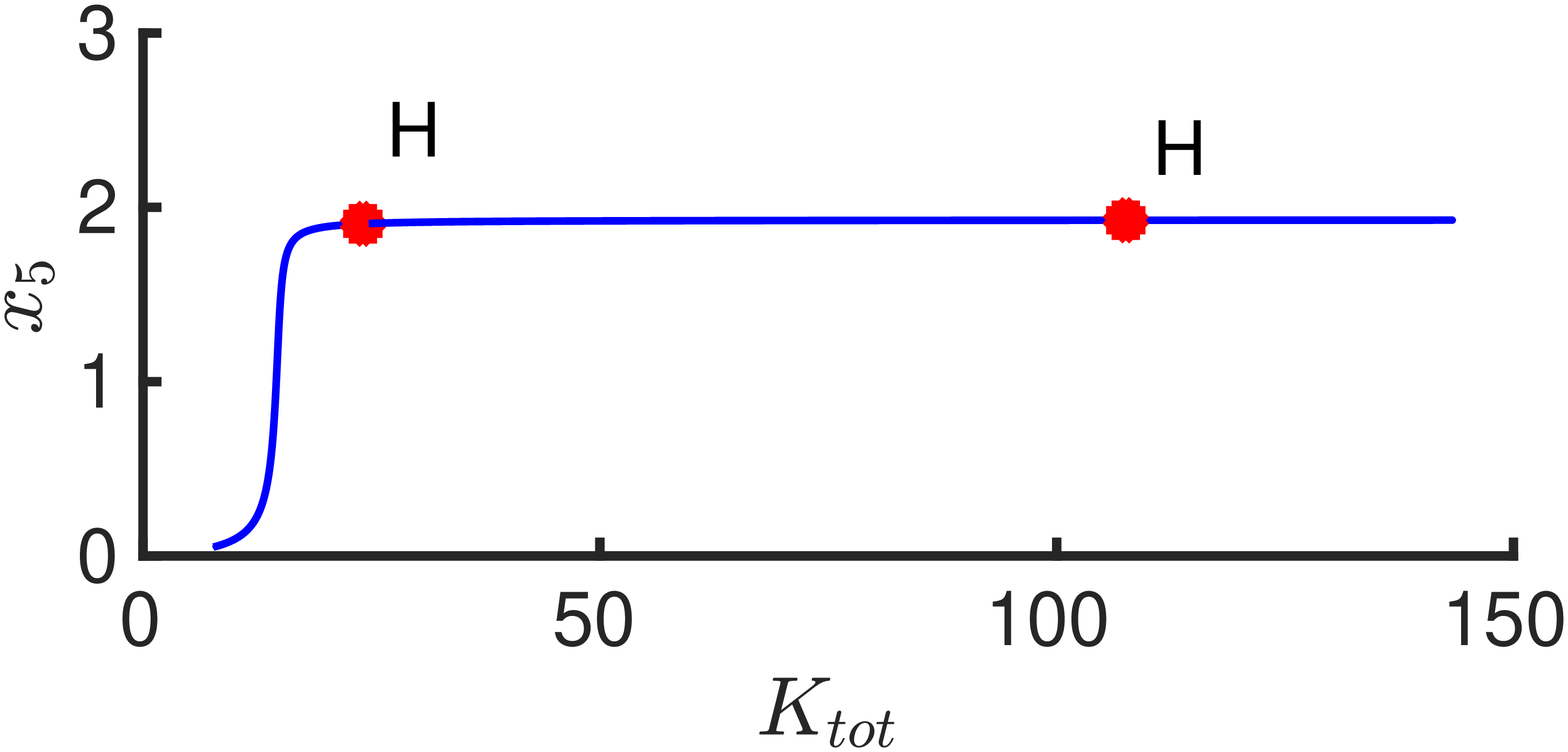}
    \subcaption{
      Bif.~parameter $\Ktot$.
    }
  \end{subfigure}
  \hfill
  \begin{subfigure}{0.3\textwidth}
    \includegraphics[width=\textwidth]{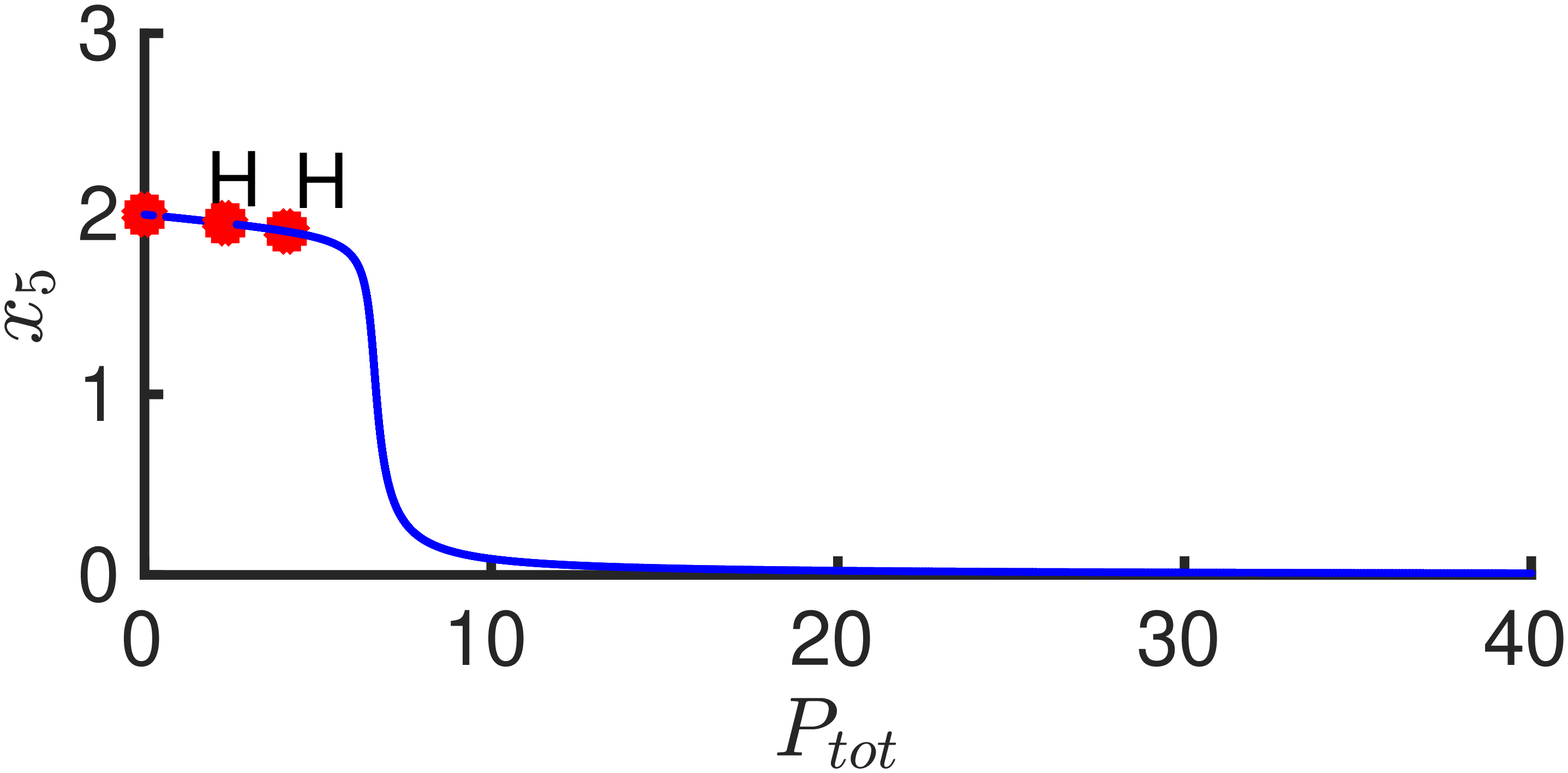}
    \subcaption{
      Bif.~parameter $\Ptot$.
    }    
  \end{subfigure}
  \hfill
  \begin{subfigure}{0.3\textwidth}
    \includegraphics[width=0.9\textwidth]{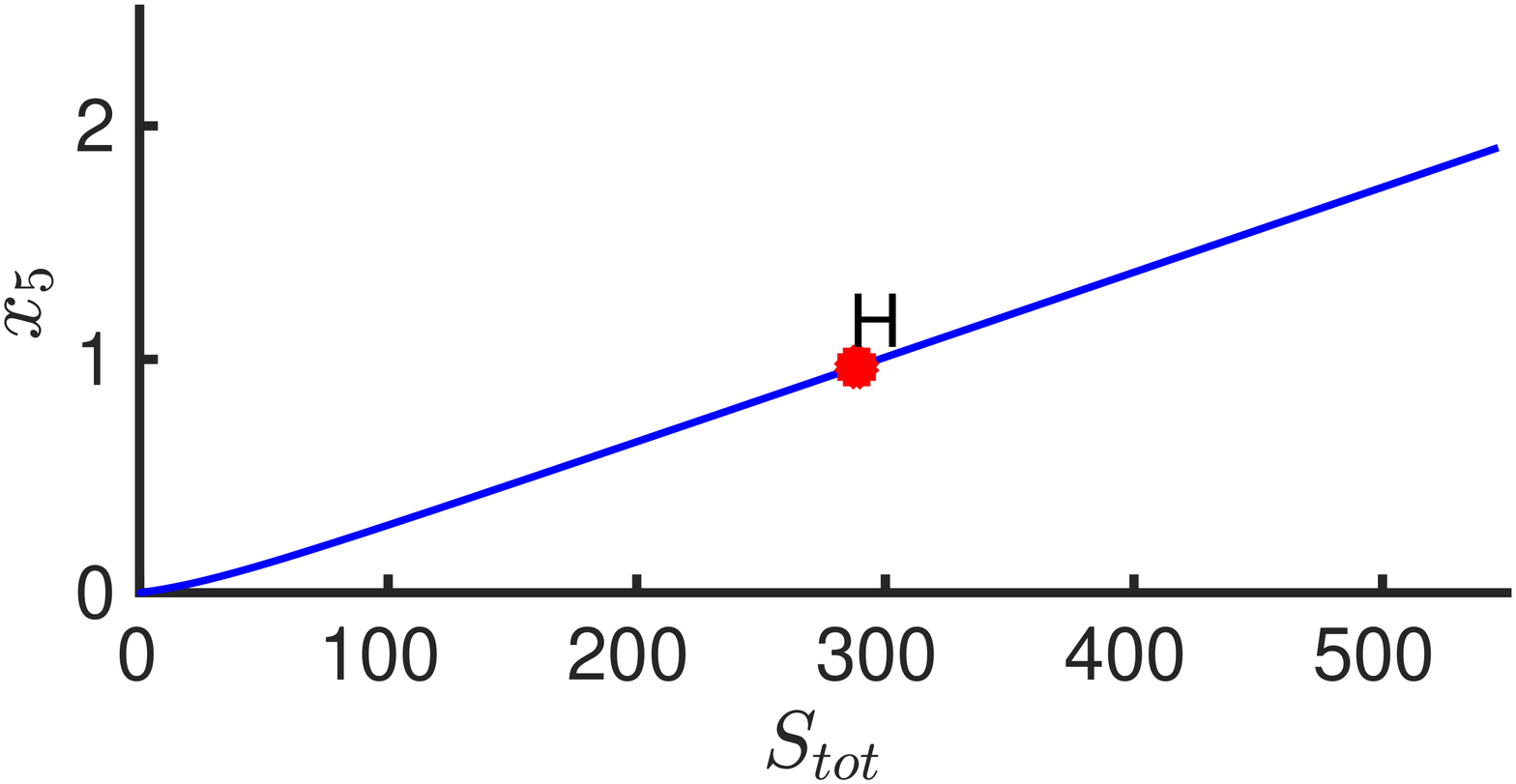}
    \subcaption{
      Bif.~parameter $\Stot$. 
    }    
  \end{subfigure}
  \caption{
    \label{fig:hopf-procedure}
    Numerical continuation of the %unique positive 
    steady state~\eqref{eq:steady-state-procedure},
    when total amounts are as in~\eqref{eq:totals-procedure}:
    (a) A (supercritical) Hopf
    bifurcations are at $\Ktot \approx 24.0623$ and $107.5635$.
    (b) (Supercritical)
    Hopf
    bifurcations are at $\Ptot \approx 4.1022$ and $\Ptot \approx
    2.3275$. {\tt Matcont} reported a branch point, the leftmost red circle, at $P_{tot} \approx
    -8.5427 \times 10^{-13}$, i.e., for $\Ptot  \approx 0$  and thus outside
    the domain of interest. 
    (c) A (supercritical) Hopf bifurcation is at $\Stot
    \approx 288.4384$.
}
\end{figure}

%---------------------------
% SIMULATION RESULTS
%---------------------------
\section{Dynamics: simulations and conjectures} \label{sec:simulations}
Are oscillations the norm when the mixed-mechanism system has an unstable steady state?  We conjecture that this is the case.
\begin{conjecture} \label{conj:osc}
Consider the mixed-mechanism network, and any choice of rate constants and total amounts.  
If the unique steady state in $\invtPoly$ is unstable, then $\invtPoly$ contains a  stable periodic orbit.
\end{conjecture}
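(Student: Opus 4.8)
The plan is to combine a global confinement (permanence) argument, a sharpened local analysis at the unique steady state, a reduction to a planar Poincar\'e--Bendixson argument, and finally a criticality (first-Lyapunov-coefficient) computation to promote the resulting periodic orbit to a stable one. First I would establish permanence: every trajectory starting in the relative interior of $\invtPoly$ eventually enters, and stays inside, a fixed compact subset $W$ of the relative interior. Boundedness is immediate from the conservation laws~\eqref{eqn:conservation}, and since there are no boundary steady states (see the discussion preceding Proposition~\ref{prop:unique-steady-state}), one can invoke the standard siphon/Petri-net persistence criteria for mass-action systems to rule out $\omega$-limit points on $\partial\invtPoly$. Combined with compactness, this yields permanence, so it suffices to analyze the flow on the compact forward-invariant set $W$, on which $x^*$ is the unique equilibrium.

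Second, I would sharpen the local picture at $x^*$. From the Routh--Hurwitz computation in the proof of Proposition~\ref{prop:surface}, $\det H_1,\dots,\det H_4$ are strictly positive for all positive parameters and $\det H_6=b_6\,\det H_5$ with $b_6>0$; hence $x^*$ is unstable exactly when $\det H_5<0$, and in that regime the reduced Jacobian has exactly one complex-conjugate pair of eigenvalues in the open right half-plane. So, generically, $x^*$ is a hyperbolic equilibrium with a two-dimensional unstable manifold $W^u$ and a four-dimensional stable manifold $W^s$ (its Poincar\'e--Hopf index is then $(-1)^2=+1$, consistent with the total index of $+1$ forced by the inward-pointing behaviour of the field on the ball $\invtPoly$, so index theory poses no obstruction to a unique equilibrium of this type). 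The objective is then to show that $W$ must contain a periodic orbit encircling $x^*$ in the ``$W^u$ direction.''

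Third, I would produce the orbit and establish its stability. On the unstable side near the surface $\mathcal H$, Theorem~\ref{thm:main} already gives a simple Hopf bifurcation; since all numerical continuations report it as supercritical, a first-Lyapunov-coefficient computation at a representative point of $\mathcal H$ --- followed by a generic argument along $\mathcal H$ --- would show the bifurcating orbit is stable, settling the conjecture in a neighbourhood of $\mathcal H$. For parameters deeper in the unstable region, I would try to continue this stable periodic orbit across the whole region: it cannot collapse back into $x^*$ (that would force $\det H_5=0$, i.e.\ a return to $\mathcal H$), and by permanence the closure of any such continuation family stays inside $W$, so one only needs to exclude folds, period-doublings, torus bifurcations, and homoclinic collisions with $x^*$ along the way. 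A cleaner --- though not fully general --- alternative is to exploit the fast--slow structure coming from the large catalytic and association constants $k_5,k_{10}$ of the distributive block: a singular-perturbation reduction to a planar slow subsystem would let Poincar\'e--Bendixson deliver the orbit directly, with a planar computation giving its stability.

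The main obstacle, and the reason this is stated only as a conjecture, is the global step for \emph{arbitrary} rate constants: in the six-dimensional system one must rule out invariant tori and chaotic attractors, and one must guarantee that a \emph{stable} periodic orbit persists throughout the entire unstable parameter region rather than being destroyed in a secondary bifurcation. No monotonicity, Lyapunov, or low-rank-attractor structure strong enough to do this is currently known for~\eqref{eq:OEs-mixed}, so a complete proof would likely require either such a structural discovery or a delicate, computer-assisted global continuation argument; the results of this paper provide the local input (the surface $\mathcal H$ and the Hopf bifurcations along it) that any such argument would start from.
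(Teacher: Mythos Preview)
The paper does \emph{not} prove this statement: Conjecture~\ref{conj:osc} is explicitly posed as an open problem, and the only evidence offered is the numerical simulations in Figure~\ref{fig:oscis}. There is therefore no proof in the paper to compare your proposal against.

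That said, your plan, while thoughtful, overstates what the paper actually establishes and would not close the gap even as a program. First, the Routh--Hurwitz positivity you invoke from Proposition~\ref{prop:surface} (that $\det H_1,\dots,\det H_4>0$ and $b_6>0$) is proved there only under the standing hypothesis $k_2=k_6=k_9$, whereas the conjecture is for \emph{arbitrary} rate constants; without that hypothesis you do not even know that instability of $x^*$ forces exactly one complex pair in the right half-plane, so the ``two-dimensional unstable manifold'' picture is not justified in general. Second, Theorem~\ref{thm:main} applies only to the specific rate constants of Table~\ref{tab:rates}, and the supercriticality of the Hopf bifurcations is reported only numerically (Figures~\ref{fig:1-para-conti} and~\ref{fig:hopf-procedure}); no first-Lyapunov-coefficient computation is carried out in the paper, so the near-$\mathcal H$ case is not settled either. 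Third, the fast--slow reduction you sketch would at best cover a limiting regime in $k_5,k_{10}$, not the full parameter space the conjecture addresses.

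You correctly identify the essential obstruction in your final paragraph: ruling out secondary bifurcations, tori, or chaotic attractors in a six-dimensional system over the entire unstable region is exactly what is missing, and it is why the authors state this as a conjecture rather than a theorem. Your outline is a reasonable description of what a proof \emph{might} look like, but it is not a proof, and the paper does not claim one.
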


\begin{figure}[ht]

  \begin{subfigure}{0.3\textwidth}
    \includegraphics[width=0.9\textwidth]{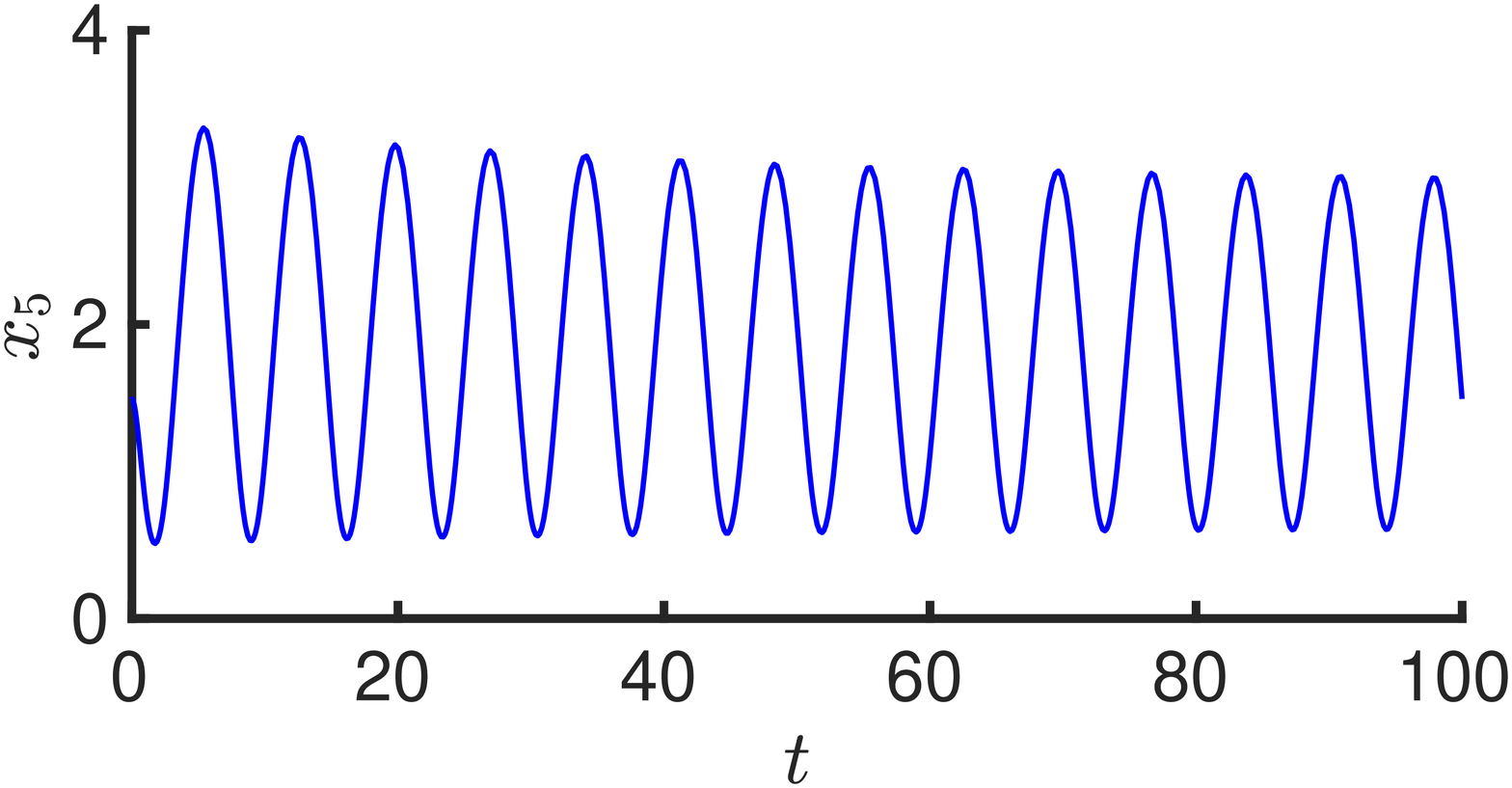}
    \subcaption{
      $x_5$ vs.~$t$.
    }
  \end{subfigure}
  \begin{subfigure}{0.3\textwidth}
    \includegraphics[width=0.9\textwidth]{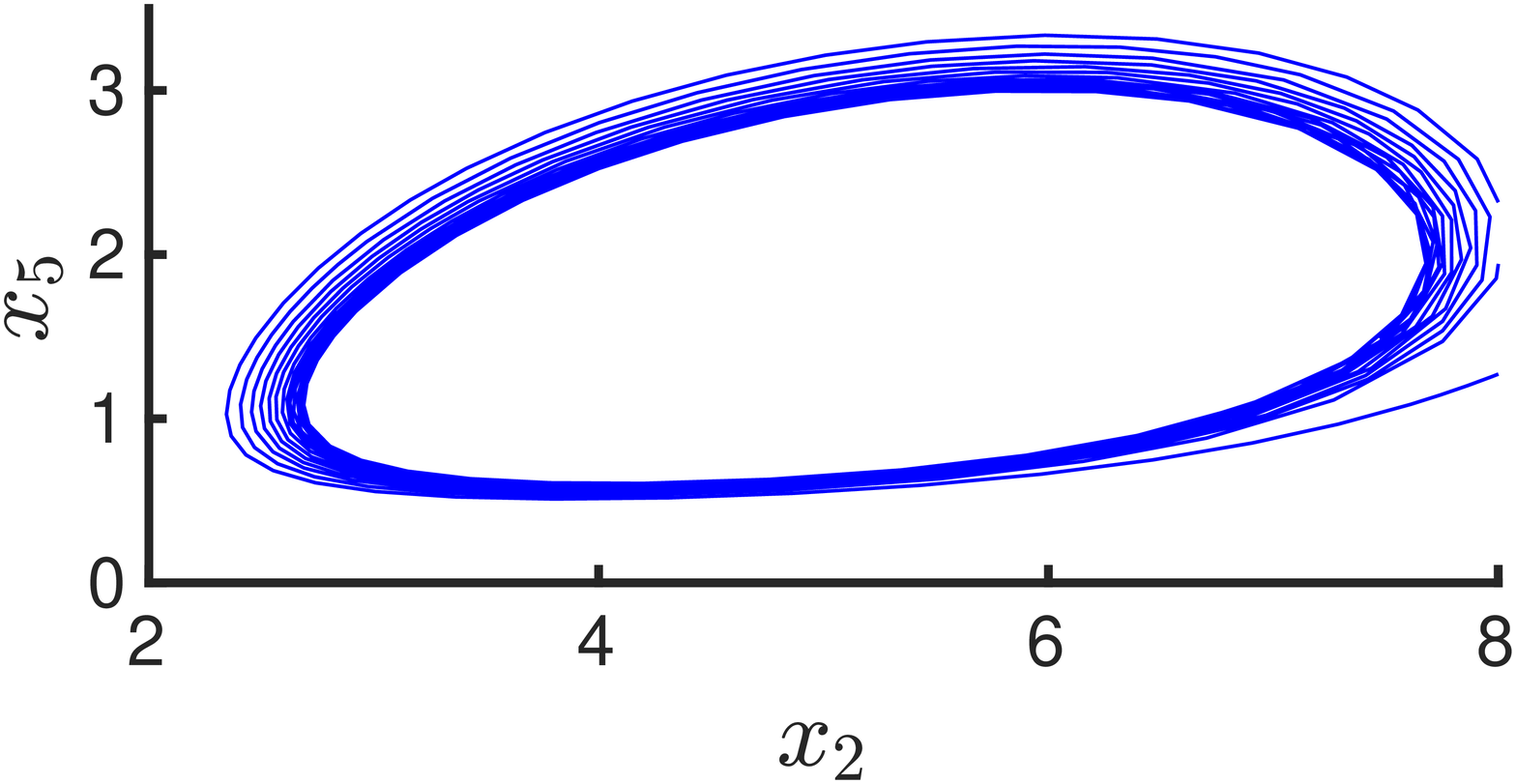}
    \subcaption{
      $x_5$ vs.~$x_2$.
    }    
  \end{subfigure}
  \begin{subfigure}{0.3\textwidth}
    \includegraphics[width=0.9\textwidth]{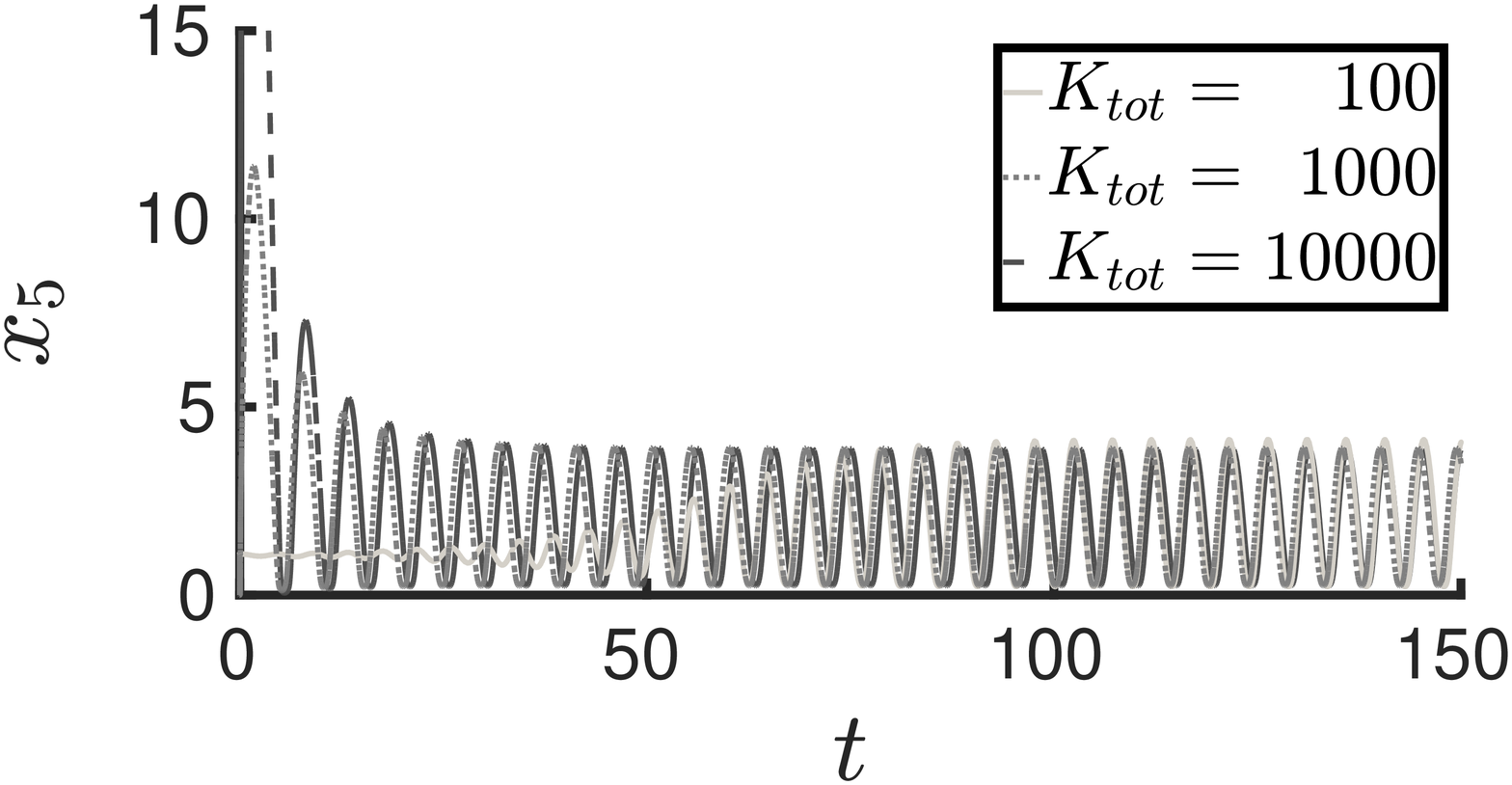}
    \subcaption{
      Increasing $K_{tot}$.
    }    
  \end{subfigure}
  \caption{
    \label{fig:oscis}
    Numerical verification of oscillations in the mixed-mechanism system with rate constants as in Table~\ref{tab:rates}.  
    For (a) and (b), we used $(\Ktot,\Ptot,\Stot)=(14, 5, 40)$ and 
    initial values as in~\eqref{eq:steady-ste-for-17-5-40}. 
    Here the solution converges to a periodic orbit.
    For (c), we used $(\Ptot,\Stot)=(8, 40)$ and 
    three values for $\Ktot$ (namely, 100, 1000, and 10000),
    and again  initial values as in~\eqref{eq:steady-ste-for-17-5-40}, except
    that $x_5=1.1$.
    Again the solutions seem to converge to a periodic orbit, and
    moreover this periodic orbit appears  not to depend on the value
    of $\Ktot$.  See Conjecture \ref{conj:k-tot}.
   }
\end{figure}

Some simulations are shown in Figure~\ref{fig:oscis}.  In (A) and (B) of that figure, we see 
solutions converging to a period orbit; this system arises from total-amounts similar to those that 
Suwanmajo and Krishnan
found to support oscillations.  
In contrast, in Figure~\ref{fig:oscis}(C),
 we see oscillations,  when $(\Ptot,\Stot)=(8,40)$,
 for  
three large values for $K_{\rm tot}$: 100, 1000, and 10000.  
Oscillations persist across these values,
which yields a much larger range for $K_{\rm tot}$ than Suwanmajo and Krishnan's results would suggest.

Moreover, the value of $K_{\rm tot}$ appears {\em not} to affect the resulting periodic orbit (when projected to $x_5$, the concentration of the doubly phosphorylated substrate $S_2$). 
Could this be a biological design mechanism for robust timekeeping (for instance, in circadian clocks)?  Mathematically, we conjecture that  oscillations indeed persist for arbitrarily large $K_{\rm tot}$;  and, that  the periodic orbit in $x_5$ indeed does  not depend on $K_{\rm tot}$. 
\begin{conjecture} \label{conj:k-tot} ~
\begin{enumerate}
	\item Consider the mixed-mechanism network with rate constants as in Table~\ref{tab:rates}.  Then there exist values of $\Ptot$ and $\Stot$ such that for $\Ktot$ arbitrarily large, the unique steady state in $\invtPoly$ is unstable.
	\item For such values of $\Ptot$ and $\Stot$ and for sufficiently large $\Ktot$, the compatibility class $\invtPoly$ contains a periodic orbit such that this orbit in $x_5$ (the concentration of $S_2$) does not depend on the value of $\Ktot$.
\end{enumerate}
\end{conjecture}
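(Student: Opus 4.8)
The plan is to combine the steady-state parametrization of Section~\ref{sec:steady-states} with a geometric singular perturbation analysis of the $\Ktot\to\infty$ limit; the first gives part~(1), and the limit gives the $\Ktot$-independence asserted in part~(2).

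\emph{Part (1): instability for all large $\Ktot$.} Fix $\Ptot$ and $\Stot$ --- with the eventual aim of choosing, say, $\Ptot=8$ and $\Stot=40$, the values in Figure~\ref{fig:oscis}(c) --- and consider the curve $\Ktot\mapsto(x_1(\Ktot),x_2(\Ktot),x_6(\Ktot))$ obtained by inverting the bijection $\phi\circ\chi$ of Proposition~\ref{prop:param-comp-classes}. From the explicit formulas~\eqref{eq:param-c-class} one first shows that, along this curve, $x_1\to 0$, $x_2\to\infty$, $x_1x_2\to u_\infty$, and $x_6\to x_6^\infty$, where $u_\infty$ and $x_6^\infty$ are the explicit positive solutions of $u_\infty=\tfrac{1800}{1009}(\Ptot-x_6^\infty)$ and $\Stot=\tfrac{2809}{1800}u_\infty+\tfrac{161}{900}\,u_\infty/x_6^\infty$ (the first from $\Ptot$, the second from the $x_1\to 0$ limit of $\Stot$; the dichotomy $x_1\not\to 0 \Rightarrow \Ktot$ or $\Stot$ bounded rules out other behavior). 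By Proposition~\ref{prop:surface}, the unique steady state is unstable exactly when ${\rm numerator}(\det H_5)<0$, so it suffices to show this polynomial is eventually negative along the curve. Substituting $x_1=u_\infty/x_2\,(1+o(1))$ and $x_6=x_6^\infty+o(1)$ turns ${\rm numerator}(\det H_5)$ into a Laurent polynomial in $x_2$; the task reduces to identifying its dominant monomial as $x_2\to\infty$ and checking, using the explicit coefficients in \texttt{mixed\_analysis\_H5N\_x2\_LT.nb}, that the associated coefficient --- a rational function of $u_\infty,x_6^\infty$ --- is negative for a suitable $(\Ptot,\Stot)$.

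\emph{Part (2): a $\Ktot$-independent limit cycle.} Substituting the conservation law $x_2=\Ktot-x_3-x_4$ into~\eqref{eq:OEs-mixed}, setting $\varepsilon:=1/\Ktot$, and rescaling $x_1=\varepsilon\,y_1$ (mirroring $x_1=O(1/\Ktot)$ from Part~(1)) puts the system in standard slow--fast form with a \emph{single} fast variable,
\[
\varepsilon\,\dot y_1 ~=~ -k_1 y_1 + k_2 x_3 + k_{10} x_9 + O(\varepsilon),
\]
while $x_3,x_4,x_5,x_6,x_7,x_8,x_9$ move on the $O(1)$ time scale. The critical manifold $\{\,y_1=(k_2x_3+k_{10}x_9)/k_1\,\}$ is normally hyperbolic and attracting (its only fast eigenvalue is $-k_1<0$), so Fenichel's theorem provides, for small $\varepsilon$, an attracting invariant slow manifold $O(\varepsilon)$-close to it. Substituting the quasi-steady-state value of $y_1$ gives the reduced flow
\[
\dot x_3 = -k_3 x_3 + k_{10} x_9, \qquad \dot x_4 = k_3 x_3 - k_4 x_4, \qquad \dot x_5 = k_4 x_4 - k_5 x_5 x_6 + k_6 x_7,
\]
together with the original equations for $x_6,x_7,x_8,x_9$, on the invariant set cut out by $\Ptot=x_6+x_7+x_9$ and (in the limit) $\Stot=x_3+x_4+x_5+x_7+x_8+x_9$. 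This five-dimensional system --- crucially --- \emph{does not involve $\Ktot$}, and $x_5$ is one of its coordinates.

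\emph{Conclusion and the main obstacle.} Granting that, for the $(\Ptot,\Stot)$ chosen in Part~(1), this reduced system has a stable hyperbolic periodic orbit, one concludes as follows: a hyperbolic limit cycle persists under the $O(\varepsilon)$ perturbation off the slow manifold, so for all sufficiently large $\Ktot$ the full system~\eqref{eq:OEs-mixed} has a periodic orbit whose $x_5$-component lies within $O(1/\Ktot)$ of the $\Ktot$-independent reduced cycle --- the (asymptotic) content of the conjecture; likewise, instability of the reduced steady state forces instability of the full one for large $\Ktot$, giving an alternative proof of Part~(1). The main obstacle is establishing the stable limit cycle of the reduced system: Poincar\'e--Bendixson is unavailable in dimension five, so one would need either (i) to locate a Hopf bifurcation of the reduced system in $\Ptot$ or $\Stot$, apply the Hopf criterion of Proposition~\ref{prop:yang} to its characteristic polynomial, verify supercriticality, and continue the branch to the regime of Figure~\ref{fig:oscis}; or (ii) to exploit a further time-scale separation inside the reduced system; or (iii) to construct by hand a trapping region together with a return-map argument. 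Converting the numerical evidence of Figure~\ref{fig:oscis}(c) into such a proof is the crux of the conjecture.
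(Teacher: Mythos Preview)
The statement you are attempting to prove is labeled a \emph{conjecture} in the paper and is not proved there. The paper offers only numerical evidence (Figure~\ref{fig:oscis}(c)) and a one-line suggestion that one might ``analyze the robustness of the period and the amplitude with respect to $K_{\rm tot}$'' using sensitivity-analysis tools from~\cite{br74,bi04,imr17}. There is therefore no proof in the paper against which to compare.

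That said, your proposal is a substantive and mathematically natural outline, and it goes well beyond what the paper sketches. Your Part~(1) asymptotics are set up correctly: from~\eqref{eq:param-c-class} with $\Ptot,\Stot$ fixed, the relation $\Ptot=x_6+\tfrac{1009}{1800}x_1x_2$ forces $x_1x_2$ bounded, whence $\Ktot=x_2+x_1x_2\to\infty$ implies $x_2\to\infty$ and $x_1\to 0$; the limiting equations for $(u_\infty,x_6^\infty)$ are right. Your Part~(2) slow--fast rescaling is also correct: with $x_2=\Ktot-x_3-x_4$ and $x_1=\varepsilon y_1$, the $\dot x_3$ equation becomes $k_1y_1-(k_2+k_3)x_3+O(\varepsilon)$, and substituting the critical-manifold value of $y_1$ gives exactly the reduced system you wrote, which is indeed $\Ktot$-free. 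The Fenichel step is clean since the fast direction is one-dimensional with eigenvalue $-k_1$.

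You are honest about the genuine gap, and it is the real one: establishing a hyperbolic stable limit cycle for the reduced five-dimensional system at the chosen $(\Ptot,\Stot)$. None of your three suggested routes (Hopf plus continuation, further time-scale separation, trapping region) is straightforward, and until one is carried out the argument remains a heuristic. A smaller gap in Part~(1) is that you stop short of actually extracting the dominant term of ${\rm numerator}(\det H_5)$ under the substitution $x_1=u_\infty/x_2$; since that numerator is a polynomial in $x_1,x_2,x_6$ of total degree up to~$14$, the bookkeeping is nontrivial and the sign of the leading coefficient at $(u_\infty,x_6^\infty)$ must still be checked explicitly. In short: the strategy is sound and more concrete than anything the paper offers, but it remains a proof \emph{plan}, which is consistent with the statement's status as an open conjecture.
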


One way to tackle Conjecture~\ref{conj:k-tot} is to
analyze the robustness of the period and the amplitude with respect to $K_{\rm tot}$ using the theory developed 
in \cite{br74,bi04,imr17}.

Finally, we consider the dynamics in compatibility classes that contain a locally stable steady state.  Our simulations  suggest that such a steady state is in fact globally stable.
Accordingly, we pose the question, {\em Consider the mixed-mechanism network, and any choice of rate constants and total amounts.  
If the unique steady state $x^*$ in $\invtPoly$ is locally stable, does it always follow that $x^*$ is globally stable?}  In the Michaelis-Menten limit, this is true~\cite{Rao-2}.

%---------------------------
% Discussion
%---------------------------
\section{Discussion} \label{sec:discussion}
We return to the question, {\em How do oscillations emerge in phosphorylation networks?}  Concretely, we would like (1) easy-to-check criteria for exactly which phosphorylation networks admits oscillations or Hopf bifurcations, and
(2) for those networks that admit oscillations, a better understanding of the ``geography of parameter space'', that is, a characterization of which rate constants and initial conditions yield oscillations.  Both of these problems are still unresolved, and the second problem in particular is very difficult.

Nevertheless, here we made progress on characterizing some of the geography of parameter space for the mixed-mechanism phosphorylation network.  Indeed, we found that a single surface defines the boundary between stable and unstable steady states, and this surface consists generically of Hopf bifurcations.  Hence, when a steady state switches from stable to unstable, then we expect it to undergo a Hopf bifurcation leading to oscillations.  Additionally, we gave a procedure for generating many parameter values leading to oscillations.  

  We now discuss the significance of our work.  At a glance, it might
  seem that our results are specific to
  network~\eqref{eq:mixed-network} and rate constants related to those in
  Table~\ref{tab:rates}.  However, the approach is general: for other
  rate constants (e.g., estimated from data) or other networks (e.g., a
  version of the ERK network from~\cite{long-term} also
  has oscillations and a unique steady state), one could apply the
  same techniques. Therefore, the potential impact is broad.

Going forward, we hope that the novel techniques we used -- specifically, using a steady-state parametrization together with a Hopf-bifurcation criterion -- will contribute to solving other problems.  For instance, 
we expect that such tools could help solve an important open problem in this area~\cite{perspective}, namely, the question of whether oscillations or Hopf bifurcations arise from the fully distributive phosphorylation network.

\subsection*{Acknowledgements}
AS was partially supported by the NSF (DMS-1312473/1513364 and DMS-1752672)
and the Simons Foundation (\#521874). AS thanks Alan Rendall and Jonathan Tyler for helpful discussions.
CC was partially supported by the Deutsche Forschungsgemeinschaft DFG
(DFG-284057449). The authors two referees for their helpful suggestions.  %\\

\appendix

\section{Files in the Supporting Information} \label{sec:A}
The following files 
  can be found 
  at \url{http://www.math.tamu.edu/~annejls/mixed.html}:

\underline{Text files}:
\begin{itemize}
\item \texttt{mixed\_H5N\_kb.txt} \ldots contains \texttt{H5N}, the
  numerator of $\det H_5$ under the assumption $k_2=k_6=k_9=k_b$
\item \texttt{mixed\_W.txt} \ldots contains a matrix \texttt{W} that
  defines (\ref{eqn:conservation})
\item \texttt{mixed\_xt.txt} \ldots contains {\tt xt}, the
  parameterization (\ref{eq:param})
\item \texttt{mixed\_Jx.txt} \ldots contains {\tt Jx}, the Jacobian
  evaluated at the parameterization (\ref{eq:param})
\end{itemize}

\underline{\texttt{Mathematica} Notebooks:}
\begin{itemize}
\item \texttt{mixed\_analysis\_H5N\_x1\_LT.nb}:\\
  \underline{Functionality:} This file can be used to obtain
  $\text{numerator}(\det H_5)$ as in (\ref{eq:numerator-x1}), in
  particular to examine the coefficients $\alpha_{01}$, $\alpha_{10}$,
  \ldots \\ 
  \underline{Input:} the file \texttt{mixed\_H5N\_kb.txt}
\item \texttt{mixed\_analysis\_H5N\_x2\_LT.nb}: \\
\underline{Functionality:} This file can be used to obtain
  $\text{numerator}(\det H_5)$ as in (\ref{eq:h5}), in
  particular to examine the coefficients $\alpha_{0}$, \ldots,
  $\alpha_{3}$ and $\beta_0$, \ldots, $\beta_3$. \\ 
  \underline{Input:} the file \texttt{mixed\_H5N\_kb.txt}
\item \texttt{mixed\_coeffs\_charpoly.nb}: \\
  \underline{Functionality:} This file can be used to obtain
  the characteristic polynomial of the Jacobian of the system
  (\ref{eq:OEs-mixed}). It contains the {\tt Mathematica} commands to
  establish $b_i>0$. \\ 
  \underline{Input:} the file \texttt{mixed\_Jx.txt}
\item \texttt{mixed\_Hi.nb}: \\
  \underline{Functionality:} This file can be used to obtain
  the determinants of the Hurwitz matrices $H_2$, \ldots, $H_5$. It
  contains the {\tt Mathematica} commands to establish $\det H_i >0$, for
  $i=2$, $3$, $4$ and that $\det H_5$ is of mixed sign. \\
  \underline{Input:} the file \texttt{mixed\_Jx.txt}
\item \texttt{mixed\_generate\_rc.nb}: \\
  \underline{Functionality:} This file contains a realization of
  Procedure~\ref{proc:rates}. \\
  \underline{Input:} the files \texttt{mixed\_H5N\_kb.txt},
  \texttt{mixed\_W.txt}, \texttt{mixed\_xt.txt},
  \texttt{mixed\_Jx.txt}.
\end{itemize}

\bibliographystyle{amsplain}

\bibliography{dual-site}

\providecommand{\bysame}{\leavevmode\hbox to3em{\hrulefill}\thinspace}
\providecommand{\MR}{\relax\ifhmode\unskip\space\fi MR }
% \MRhref is called by the amsart/book/proc definition of \MR.
\providecommand{\MRhref}[2]{%
  \href{http://www.ams.org/mathscinet-getitem?mr=#1}{#2}
}
\providecommand{\href}[2]{#2}
\begin{thebibliography}{10}

\bibitem{Aoki}
Kazuhiro Aoki, Masashi Yamada, Katsuyuki Kunida, Shuhei Yasuda, and Michiyuki
  Matsuda, \emph{Processive phosphorylation of {ERK} {MAP} kinase in mammalian
  cells}, P.\ Natl.\ Acad.\ Sci.\ USA \textbf{108} (2011), no.~31,
  12675--12680.

\bibitem{adp18}
Peter Atkins, Julio De~Paula, and James Keeler, \emph{Atkins' physical
  chemistry}, Oxford University Press, 2018.

\bibitem{br74}
EG~Bure and Ye~N Rozenvasser, \emph{On investigations of autooscillating system
  sensitivity}, Avtomat. i Telemekh (1974), no.~7, 9--17.

\bibitem{CFMW}
Carsten Conradi, Elisenda Feliu, Maya Mincheva, and Carsten Wiuf,
  \emph{Identifying parameter regions for multistationarity}, PLoS Comput.\
  Biol. \textbf{13} (2017), no.~10, e1005751.

\bibitem{a6maya}
Carsten Conradi and Maya Mincheva, \emph{Catalytic constants enable the
  emergence of bistability in dual phosphorylation}, J. R. Soc. Interface
  \textbf{11} (2014), no.~95.

\bibitem{ConradiShiu}
Carsten Conradi and Anne Shiu, \emph{A global convergence result for processive
  multisite phosphorylation systems}, B.\ Math.\ Biol. \textbf{77} (2015),
  no.~1, 126--155. \MR{3303108}

\bibitem{perspective}
\bysame, \emph{Dynamics of post-translational modification systems: recent
  progress and future challenges}, Biophys.\ J. \textbf{114} (2018), no.~3,
  507--515.

\bibitem{matcont}
Annick Dhooge, Willy Govaerts, and Yuri~A. Kuznetsov, \emph{{MATCONT}: A
  {MATLAB} package for numerical bifurcation analysis of {ODEs}}, ACM Trans.\
  Math.\ Softw. \textbf{29} (2003), no.~2, 141--164.

\bibitem{Domijan2009}
Mirela Domijan and Markus Kirkilionis, \emph{Bistability and oscillations in
  chemical reaction networks}, J.\ Math.\ Biol. \textbf{59} (2009), no.~4,
  467--501.

\bibitem{EithunShiu}
Mitchell Eithun and Anne Shiu, \emph{An all-encompassing global convergence
  result for processive multisite phosphorylation systems}, Math.\ Biosci.
  \textbf{291} (2017), 1--9.

\bibitem{Errami}
Hassan Errami, Markus Eiswirth, Dima Grigoriev, Werner~M. Seiler, Thomas Sturm,
  and Andreas Weber, \emph{Detection of {Hopf} bifurcations in chemical
  reaction networks using convex coordinates}, J. Comput. Phys. \textbf{291}
  (2015), 279--302.

\bibitem{fha14}
James~E Ferrell and Sang~Hoon Ha, \emph{Ultrasensitivity part {II}: multisite
  phosphorylation, stoichiometric inhibitors, and positive feedback}, Trends
  Biochem. Sci. \textbf{39} (2014), no.~11, 556--569.

\bibitem{fg59}
Feliks~R. Gantmacher, \emph{Matrix theory}, Chelsea, New York \textbf{21}
  (1959).

\bibitem{gatermann-hopf}
Karin Gatermann, Markus Eiswirth, and Anke Sensse, \emph{Toric ideals and graph
  theory to analyze {Hopf} bifurcations in mass action systems}, J.\ Symbolic
  Comput. \textbf{40} (2005), no.~6, 1361--1382.

\bibitem{Gelfand:Kapranov:Zelevinsky}
I.M. Gelfand, M.M. Kapranov, and A.V. Zelevinsky, \emph{Discriminants,
  resultants and multidimensional determinants}, Birkh{\"a}user, 1994.

\bibitem{gh13}
John Guckenheimer and Philip Holmes, \emph{Nonlinear oscillations, dynamical
  systems, and bifurcations of vector fields}, vol.~42, Springer Science \&
  Business Media, 2013.

\bibitem{Guna_threshold}
Jeremy Gunawardena, \emph{Multisite protein phosphorylation makes a good
  threshold but can be a poor switch}, P.\ Natl.\ Acad.\ Sci.\ USA \textbf{102}
  (2005), no.~41, 14617--14622.

\bibitem{hadac-osc}
Otto Hada{\v{c}}, Franti{\v{s}}ek Muzika, Vladislav Nevoral, Michal
  P{\v{r}}ibyl, and Igor Schreiber, \emph{Minimal oscillating subnetwork in the
  {Huang-Ferrell} model of the {MAPK} cascade}, PLOS ONE \textbf{12} (2017),
  no.~6, 1--25.

\bibitem{bistable}
Juliette Hell and Alan~D. Rendall, \emph{A proof of bistability for the dual
  futile cycle}, Nonlinear Anal.-Real \textbf{24} (2015), 175--189.

\bibitem{yeast-mapk-oscillations}
Zoe Hilioti, Walid Sabbagh, Saurabh Paliwal, Adriel Bergmann, Marcus~D
  Goncalves, Lee Bardwell, and Andre Levchenko, \emph{Oscillatory
  phosphorylation of yeast {Fus3} {MAP} kinase controls periodic gene
  expression and morphogenesis}, Curr.\ Biol. \textbf{18} (2008), no.~21,
  1700--1706.

\bibitem{oscillations-mapk-cancer}
Huizhong Hu, Alexey Goltsov, James~L Bown, Andrew~H Sims, Simon~P Langdon,
  David~J Harrison, and Dana Faratian, \emph{Feedforward and feedback
  regulation of the {MAPK} and {PI3K} oscillatory circuit in breast cancer},
  Cell.\ Signal. \textbf{25} (2013), no.~1, 26--32.

\bibitem{mathematica}
Wolfram~Research{,} Inc., \emph{Mathematica, {V}ersion 11.3}, Champaign, IL,
  2018.

\bibitem{imr17}
Brian Ingalls, Maya Mincheva, and Marc~R. Roussel, \emph{Parametric sensitivity
  analysis of oscillatory delay systems with an application to gene
  regulation}, B.\ Math.\ Biol. \textbf{79} (2017), no.~7, 1539--1563.

\bibitem{bi04}
Brian~P Ingalls, \emph{Autonomously oscillating biochemical systems: parametric
  sensitivity of extrema and period}, Systems biol. \textbf{1} (2004), no.~1,
  62--70.

\bibitem{translated}
Matthew~D. Johnston, \emph{Translated chemical reaction networks}, B.\ Math.\
  Biol. \textbf{76} (2014), no.~6, 1081--1116.

\bibitem{johnston-param}
Matthew~D. Johnston, Stefan M\"uller, and Casian Pantea, \emph{A
  deficiency-based approach to parametrizing positive equilibria of biochemical
  reaction systems}, Preprint, {\tt arXiv:1805.09295} (2018).

\bibitem{liu}
Wei~Min Liu, \emph{Criterion of {H}opf bifurcations without using eigenvalues},
  J.\ Math.\ Anal.\ Appl. \textbf{182} (1994), no.~1, 250--256. \MR{1265895}

\bibitem{lc12}
German Lozada-Cruz, \emph{The simple application of the implicit function
  theorem}, Boletin de la Asociati{\'o}n Matem{\'a}tica Venezolana
  \textbf{{XIX}} (2012), no.~1.

\bibitem{signs}
Stefan M\"uller, Elisenda Feliu, Georg Regensburger, Carsten Conradi, Anne
  Shiu, and Alicia Dickenstein, \emph{Sign conditions for injectivity of
  generalized polynomial maps with applications to chemical reaction networks
  and real algebraic geometry}, Found. Comput. Math. \textbf{16} (2016), no.~1,
  69--97.

\bibitem{ode}
Koji~L. Ode and Hiroki~R. Ueda, \emph{Design principles of
  phosphorylation-dependent timekeeping in eukaryotic circadian clocks}, Cold
  Spring Harbor Perspectives in Biology (2017).

\bibitem{PM}
Parag Patwardhan and W.~Todd Miller, \emph{Processive phosphorylation:
  Mechanism and biological importance}, Cell.\ Signal. \textbf{19} (2007),
  no.~11, 2218--2226.

\bibitem{messi}
Mercedes P{\'e}rez~Mill{\'a}n and Alicia Dickenstein, \emph{The structure of
  {MESSI} biological systems}, SIAM J.\ Appl.\ Dyn.\ Syst. \textbf{17} (2018),
  no.~2, 1650--1682.

\bibitem{TSS}
Mercedes {P\'erez Mill\'an}, Alicia Dickenstein, Anne Shiu, and Carsten
  Conradi, \emph{Chemical reaction systems with toric steady states}, B. Math.
  Biol. \textbf{74} (2012), no.~5, 1027--1065.

\bibitem{MPM_MAPK}
Mercedes P\'erez~Mill\'an and Adri\'an~G. Turjanski, \emph{{MAPK}'s networks
  and their capacity for multistationarity due to toric steady states}, Math.\
  Biosci. \textbf{262} (2015), 125--137.

\bibitem{Rao}
Shodhan Rao, \emph{Global stability of a class of futile cycles}, J.\ Math.\
  Biol. \textbf{74} (2017), 709--726.

\bibitem{Rao-2}
\bysame, \emph{Stability analysis of the {Michaelis--Menten} approximation of a
  mixed mechanism of a phosphorylation system}, Math.\ Biosci. \textbf{301}
  (2018), 159 --166.

\bibitem{long-term}
Boris~Y. Rubinstein, Henry~H. Mattingly, Alexander~M. Berezhkovskii, and
  Stanislav~Y. Shvartsman, \emph{Long-term dynamics of multisite
  phosphorylation}, Mol.\ Biol.\ Cell \textbf{27} (2016), no.~14, 2331--2340.

\bibitem{salazar}
Carlos Salazar and Thomas H\"ofer, \emph{Multisite protein phosphorylation --
  from molecular mechanisms to kinetic models}, FEBS Journal \textbf{276}
  (2009), no.~12, 3177--3198.

\bibitem{SK}
Thapanar Suwanmajo and J.~Krishnan, \emph{Mixed mechanisms of multi-site
  phosphorylation}, J. R. Soc. Interface \textbf{12} (2015), no.~107.

\bibitem{TG}
Matthew Thomson and Jeremy Gunawardena, \emph{The rational parameterisation
  theorem for multisite post-translational modification systems}, J. Theoret.
  Biol. \textbf{261} (2009), no.~4, 626--636.

\bibitem{ray}
Hwai-Ray Tung, \emph{Precluding oscillations in {M}ichaelis-{M}enten
  approximations of dual-site phosphorylation systems}, Preprint, {\tt
  arXiv:1712.03594} (2017).

\bibitem{beat}
David~M. Virshup and Daniel~B. Forger, \emph{Keeping the beat in the rising
  heat}, Cell \textbf{137} (2009), no.~4, 602--604.

\bibitem{yang-hopf}
Xiaojing Yang, \emph{Generalized form of {H}urwitz-{R}outh criterion and {H}opf
  bifurcation of higher order}, Appl.\ Math.\ Lett. \textbf{15} (2002), no.~5,
  615--621.

\end{thebibliography}

\end{document}